\newtheorem{proposition}{Proposition}
\newenvironment{proof}[1][Proof]{\noindent\textbf{#1.} }{\ \rule{0.5em}{0.5em}}
\begin{document}

\title{Spinors and gravity without Lorentz indices}
\author{John Fredsted\thanks{%
physics@johnfredsted.dk} \\
%EndAName
R\o m\o v\ae nget 32B, 8381 Tilst, Denmark}
\maketitle

\begin{abstract}
Coupling spinor fields to the gravitational field, in the setting of general
relativity, is standardly done via the introduction of a vierbein field and
the (associated minimal) spin connection field. This makes three types of
indices feature in the formalism: world/coordinate indices, Lorentz vector
indices, and Lorentz spinor indices, respectively. This article will show,
though, that it is possible to dispense altogther with the Lorentz indices,
both tensorial ones and spinorial ones, obtaining a formalism featuring only
world indices. This will be possible by having both the 'Dirac operator' and
the generators of 'Lorentz' transformations become spacetime-dependent,
although covariantly constant. The formalism is developed in the setting of
complexified quaternions.
\end{abstract}

\section{Introduction}

According to standard wisdom, see for instance \cite[Sec. 31.A]{Weinberg} or 
\cite[Sec. 12.1]{GSW}, spinor fields can be coupled to the gravitational
field, in the setting of general relativity, only via the introduction of
fields carrying Lorentz vector indices (in excess of world indices), more
specifically, the vierbein field $e^{\mu }{}_{a}$, and the spin connection
field $\omega _{\mu }{}^{ab}$. In this way, the resulting formalism ends up
featuring, somewhat unsatisfactorily, three different types of indices:
world indices, Lorentz vector indices, and Lorentz spinor indices,
respectively, the latter of course being carried by the spinor field itself.

This article will present a formalism, though, using world indices only,
i.e., a formalism in which no Lorentz indices feature, neither tensorial
ones nor spinorial ones. Perhaps surprisingly, it will prove possible to
have the spinor field carry a world index (transforming as such under
coordinate transformations), rather than a Lorentz spinor index, while as a
field still transforming in the standard spinor representation of the
Lorentz group in any local Lorentz frame. This will be achieved by having
both the 'Dirac operator' and the generators of 'Lorentz' transformations
become spacetime-dependent, although covariantly constant. By carrying a
world index, the spinor field may then readily be coupled to the
gravitational field via the connection field $\Gamma ^{\rho }{}_{\mu \nu }$
by which tensorial fields are coupled, thus implementing, it would seem, the
equivalence principle in a more coherent way than in the standard vierbein
formalism. The structure needed to set up the formalism will be constructed
in terms of quantities valued in the complexified quaternions.

\section{\label{Sec:Preliminaries}Preliminaries}

The set of complexified quaternions is denoted $\mathbb{C}\otimes \mathbb{H}$%
, equal to $\mathbb{H}\otimes \mathbb{C}$ as any two elements from $\mathbb{C%
}$ and $\mathbb{H}$, respectively, are assumed to (multiplicatively)
commute: $ch=hc$, for all $\left( c,h\right) \in \mathbb{C}\times \mathbb{H}$%
. Usual complex conjugation, $x\rightarrow x^{\ast }$, is assumed to act
only on $\mathbb{C}$, and usual quaternionic conjugation, $x\rightarrow 
\overline{x}$, is assumed to act only on $\mathbb{H}$. More specifically,%
\begin{eqnarray*}
\left( \mathbb{C}\otimes \mathbb{H}\right) ^{\ast } &=&\left\{ c^{\ast
}h\left\vert c\in \mathbb{C},h\in \mathbb{H}\right. \right\} , \\
\overline{\left( \mathbb{C}\otimes \mathbb{H}\right) } &=&\left\{ c\overline{%
h}\left\vert c\in \mathbb{C},h\in \mathbb{H}\right. \right\} .
\end{eqnarray*}%
In conjunction, these two conjugations can be used to split $\mathbb{C}%
\otimes \mathbb{H}$ as $\mathbb{C}\otimes \mathbb{H}=\left( \mathbb{C}%
\otimes \mathbb{H}\right) ^{+}\cup \left( \mathbb{C}\otimes \mathbb{H}%
\right) ^{-}$, where%
\begin{eqnarray*}
\left( \mathbb{C}\otimes \mathbb{H}\right) ^{+} &\equiv &\left\{ x\in 
\mathbb{C}\otimes \mathbb{H}\left\vert \overline{x}^{\ast }=+x\right.
\right\} , \\
\left( \mathbb{C}\otimes \mathbb{H}\right) ^{-} &\equiv &\left\{ x\in 
\mathbb{C}\otimes \mathbb{H}\left\vert \overline{x}^{\ast }=-x\right.
\right\} ,
\end{eqnarray*}%
this being an almost disjoint union, zero being the only common element of $%
\left( \mathbb{C}\otimes \mathbb{H}\right) ^{+}$ and $\left( \mathbb{C}%
\otimes \mathbb{H}\right) ^{-}$. The scalar- and vector parts of $\mathbb{C}%
\otimes \mathbb{H}$, respectively, are denoted $\mathrm{Scal}\left( \mathbb{C%
}\otimes \mathbb{H}\right) \cong \mathbb{C}$ and $\mathrm{Vec}\left( \mathbb{%
C}\otimes \mathbb{H}\right) =\left( \mathbb{C}\otimes \mathbb{H}\right)
\setminus \mathrm{Scal}\left( \mathbb{C\otimes \mathbb{H}}\right) $. A
bilinear inner product $\left\langle \cdot ,\cdot \right\rangle :\left( 
\mathbb{C}\otimes \mathbb{H}\right) ^{2}\rightarrow \mathbb{C}$ is given by%
\begin{eqnarray}
2\left\langle x,y\right\rangle &\equiv &x\overline{y}+y\overline{x}  \notag
\\
&\equiv &\overline{x}y+\overline{y}x.  \label{Eq:IPDef}
\end{eqnarray}%
The literature disagrees on the presence or not of the factor of $2$, this
however being inconsequential as long as the same factor is consistently
used throughout; for instance, if the factor figures in Eq. (\ref{Eq:IPDef})
above, then it will have to figure as well in Eq. (\ref{Eq:IPxyuv}) below.
As $\mathbb{C}\otimes \mathbb{H}$ is a socalled composition algebra \cite%
{SpringerAndVeldkamp,Okubo}, in fact a somewhat dull one as it is
associative, this inner product satisfies the following relations, among
other ones:%
\begin{eqnarray}
\left\langle x,y\right\rangle &=&\left\langle y,x\right\rangle ,
\label{Eq:IPSym} \\
\left\langle x,y\right\rangle &=&\left\langle \overline{x},\overline{y}%
\right\rangle ,  \label{Eq:IPConj} \\
\left\langle xy,z\right\rangle &=&\left\langle y,\overline{x}z\right\rangle
=\left\langle x,z\overline{y}\right\rangle ,  \label{Eq:IP(xy,z)} \\
\left\langle x,yz\right\rangle &=&\left\langle \overline{y}x,z\right\rangle
=\left\langle x\overline{z},y\right\rangle ,  \label{Eq:IP(x,yz)}
\end{eqnarray}%
not all independent but listed nonetheless for completeness, and%
\begin{equation}
\left\langle xu,yv\right\rangle +\left\langle xv,yu\right\rangle
=2\left\langle x,y\right\rangle \left\langle u,v\right\rangle ,
\label{Eq:IPxyuv}
\end{equation}%
for any $x,y,z,u,v\in \mathbb{C}\otimes \mathbb{H}$. As a basis (over $%
\mathbb{C}$) for $\mathbb{C}\otimes \mathbb{H}$, any four elements $q_{\mu
}\in \mathbb{C}\otimes \mathbb{H}$ for which $\det \left( \left\langle
q_{\mu },q_{\nu }\right\rangle \right) \neq 0$, will suffice, as then $%
\mathbb{C}\otimes \mathbb{H}=\mathrm{Span}_{\mathbb{C}}\left( q_{\mu
}\right) $. But a more specific choice of basis will be made: Let $s_{\mu
}\in \left( \mathbb{C}\otimes \mathbb{H}\right) ^{+}$ for which $\det \left(
\left\langle s_{\mu },s_{\nu }\right\rangle \right) \neq 0$. Then $%
\left\langle s_{\mu },s_{\nu }\right\rangle $ as a matrix will be symmetric,
due to Eq. (\ref{Eq:IPSym}); real-valued, due to%
\begin{eqnarray*}
\left\langle s_{\mu },s_{\nu }\right\rangle ^{\ast } &=&\left\langle s_{\mu
}^{\ast },s_{\nu }^{\ast }\right\rangle \\
&=&\left\langle \overline{s}_{\mu },\overline{s}_{\nu }\right\rangle \\
&=&\left\langle s_{\mu },s_{\nu }\right\rangle ,
\end{eqnarray*}%
using $s_{\mu }^{\ast }=\overline{s}_{\mu }$ and Eq. (\ref{Eq:IPConj}); and
non-singular, due to the determinantal condition. Furthermore, it has
signature $\left( 1,3\right) $, i.e., if it is diagonalized, then one
diagonal element will be positive, and three diagonal elements will be
negative; this is readily seen from the specific example $s_{\mu }=\left( 1,%
\mathrm{i}e_{i}\right) $, where $e_{i}\in \mathrm{Vec}\left( \mathbb{H}%
\right) $ are the standard quaternionic units obeying $e_{i}e_{j}=-\delta
_{ij}+\varepsilon _{ijk}e_{k}$. Having these properties, it is natural to
identify this quantity with the metric of a signature $\left( 1,3\right) $
Riemann-Cartan spacetime:%
\begin{equation}
g_{\mu \nu }\equiv \left\langle s_{\mu },s_{\nu }\right\rangle ,
\label{Eq:MetricDef}
\end{equation}%
thus, at the same time, elevating $s_{\mu }$ to a type $\left( 0,1\right) $
tensor field. The corresponding type $\left( 1,0\right) $ tensor field $%
s^{\mu }$ is then, of course, given by $s^{\mu }=g^{\mu \nu }s_{\nu }$. With
these two types of fields at hand, the following completeness relation may
be shown to hold:%
\begin{equation}
\left\langle x,y\right\rangle =\left\langle x,s_{\mu }\right\rangle
\left\langle s^{\mu },y\right\rangle \equiv g^{\mu \nu }\left\langle
x,s_{\mu }\right\rangle \left\langle s_{\nu },y\right\rangle ,
\label{Eq:CompletenessRelation}
\end{equation}%
for any $x,y\in \mathbb{C}\otimes \mathbb{H}$.

\section{'Modified Clifford algebra'}

Consider the following complex-valued type $\left( 2,1\right) $ tensor field:%
\begin{equation}
M^{\mu \rho }{}_{\sigma }\equiv \left\langle s^{\mu },s^{\rho }s_{\sigma
}\right\rangle ,  \label{Eq:MDef}
\end{equation}%
not to be confused with any connection field (which is not even a tensor
field, of course). Under complex conjugation, it behaves as follows:%
\begin{eqnarray}
\left( M^{\mu \rho }{}_{\sigma }\right) ^{\ast } &=&\left\langle \overline{s}%
^{\mu },\overline{s}^{\rho }\overline{s}_{\sigma }\right\rangle  \notag \\
&=&\left\langle s^{\mu },s_{\sigma }s^{\rho }\right\rangle  \notag \\
&=&M^{\mu }{}_{\sigma }{}^{\rho },  \label{Eq:MComplexConj}
\end{eqnarray}%
using $s_{\mu }^{\ast }=\overline{s}_{\mu }$ and Eq. (\ref{Eq:IPConj}).

\begin{proposition}
\label{Prop:MAlgebra}$M^{\mu \rho }{}_{\sigma }$ satisfies the following
algebra:%
\begin{eqnarray}
2g^{\mu \nu }\delta _{\sigma }^{\rho } &=&M^{\mu \rho }{}_{\tau }\left(
M^{\nu \tau }{}_{\sigma }\right) ^{\ast }+M^{\nu \rho }{}_{\tau }\left(
M^{\mu \tau }{}_{\sigma }\right) ^{\ast }  \label{Eq:MMStar} \\
&=&\left( M^{\mu \rho }{}_{\tau }\right) ^{\ast }M^{\nu \tau }{}_{\sigma
}+\left( M^{\nu \rho }{}_{\tau }\right) ^{\ast }M^{\mu \tau }{}_{\sigma }.
\label{Eq:MStarM}
\end{eqnarray}
\end{proposition}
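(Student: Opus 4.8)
The plan is to verify the identity by unpacking the definition $M^{\mu\rho}{}_\sigma = \langle s^\mu, s^\rho s_\sigma\rangle$ and the complex-conjugate form $(M^{\mu\tau}{}_\sigma)^\ast = M^\mu{}_\sigma{}^\tau = \langle s^\mu, s_\sigma s^\tau\rangle$ established in Eq. (\ref{Eq:MComplexConj}), and then reducing everything to the completeness relation (\ref{Eq:CompletenessRelation}) and the composition-algebra identity (\ref{Eq:IPxyuv}). First I would write out the right-hand side of (\ref{Eq:MMStar}): the first term is $M^{\mu\rho}{}_\tau (M^{\nu\tau}{}_\sigma)^\ast = \langle s^\rho, s^\mu s_\tau\rangle\,\langle s^\nu, s_\sigma s^\tau\rangle$, after using the symmetry (\ref{Eq:IPSym}) to put the index $\mu$ inside the first bracket. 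The factor $\langle s^\rho,s^\mu s_\tau\rangle$ I would rewrite, via (\ref{Eq:IP(x,yz)}), as $\langle \overline{s^\mu}\, s^\rho, s_\tau\rangle$, so that the summed index $\tau$ sits in the clean slot $\langle \cdot, s_\tau\rangle\langle s^\tau,\cdot\rangle$ and the completeness relation collapses the sum over $\tau$.

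**The core computation.**

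After applying (\ref{Eq:CompletenessRelation}) to the $\tau$-sum, the first term becomes $\langle \overline{s^\mu} s^\rho,\; s_\sigma s^\nu\rangle$ (up to checking that the second bracket $\langle s^\nu, s_\sigma s^\tau\rangle$ has been massaged into the form $\langle \cdot, s^\tau\rangle$ with $\cdot = \overline{s_\sigma}\,s^\nu = s_\sigma s^\nu$, since $\overline{s_\sigma}=s_\sigma^\ast$ is not directly what appears — here I must be careful and instead use (\ref{Eq:IP(xy,z)}) to move $s_\sigma$ out: $\langle s^\nu, s_\sigma s^\tau\rangle = \langle \overline{s_\sigma} s^\nu, s^\tau\rangle$). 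With both terms reduced, the right-hand side of (\ref{Eq:MMStar}) equals $\langle \overline{s^\mu} s^\rho, \overline{s_\sigma}\, s^\nu\rangle + (\mu\leftrightarrow\nu)$. Now I would invoke the key composition identity (\ref{Eq:IPxyuv}) with the substitution $x = \overline{s^\mu}$, $y=\overline{s^\nu}$, $u = s^\rho$, $v = \overline{s_\sigma}\,s^\nu$ — or, more cleanly, apply it so that the symmetrization over $\mu\leftrightarrow\nu$ on the left matches the symmetrized left-hand side $\langle xu,yv\rangle+\langle xv,yu\rangle = 2\langle x,y\rangle\langle u,v\rangle$ of (\ref{Eq:IPxyuv}). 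This produces $2\langle \overline{s^\mu},\overline{s^\nu}\rangle\,\langle s^\rho,\overline{s_\sigma}\,(\cdots)\rangle$; using (\ref{Eq:IPConj}), $\langle \overline{s^\mu},\overline{s^\nu}\rangle = \langle s^\mu,s^\nu\rangle = g^{\mu\nu}$, and the remaining bracket should reduce via (\ref{Eq:IP(x,yz)}) and the definition of the inner product to $\delta^\rho_\sigma$, giving $2g^{\mu\nu}\delta^\rho_\sigma$ as desired.

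**The second identity and the main obstacle.**

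Equation (\ref{Eq:MStarM}) follows by an entirely parallel computation with the roles of $M$ and $M^\ast$ interchanged — alternatively, one may observe that (\ref{Eq:MStarM}) is simply the complex conjugate of (\ref{Eq:MMStar}), since $g^{\mu\nu}$ and $\delta^\rho_\sigma$ are real, and $(M^{\mu\rho}{}_\tau(M^{\nu\tau}{}_\sigma)^\ast)^\ast = (M^{\mu\rho}{}_\tau)^\ast M^{\nu\tau}{}_\sigma$ (noting the $\tau$ summation is unaffected); so (\ref{Eq:MStarM}) is free once (\ref{Eq:MMStar}) is in hand. The main obstacle I anticipate is purely bookkeeping: getting the index placements and the direction of the multiplications consistent when repeatedly applying (\ref{Eq:IP(xy,z)}) and (\ref{Eq:IP(x,yz)}), in particular tracking whether a given $s$ appears barred or not and keeping straight that $\overline{s_\mu}=s_\mu^\ast$ (so quaternionic conjugation of a basis element can be traded for complex conjugation, but that then affects the $\mathbb{C}$-coefficients when the index is summed against the real metric). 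Once the identity is arranged into the precise shape $\langle xu,yv\rangle+\langle xv,yu\rangle$, the composition-algebra relation (\ref{Eq:IPxyuv}) does all the real work in one stroke; the only genuine check is that the leftover scalar factor collapses to $\delta^\rho_\sigma$ rather than to some other contraction of the metric.
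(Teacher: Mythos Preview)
Your approach is exactly the paper's: expand the definitions, use the inner-product identities to put the summed index into the form $\langle\cdot,s_\tau\rangle\langle s^\tau,\cdot\rangle$, collapse with completeness (\ref{Eq:CompletenessRelation}), apply the composition identity (\ref{Eq:IPxyuv}), and then note that (\ref{Eq:MStarM}) is the complex conjugate of (\ref{Eq:MMStar}) since $g^{\mu\nu}$ is real.

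There is one bookkeeping slip worth flagging, of precisely the kind you anticipated. You write $M^{\mu\rho}{}_\tau=\langle s^\rho,s^\mu s_\tau\rangle$, but the definition is $\langle s^\mu,s^\rho s_\tau\rangle$; symmetry (\ref{Eq:IPSym}) only swaps the two slots of the bracket, it does not let you interchange $s^\mu$ and $s^\rho$ inside a product. With the correct expression, (\ref{Eq:IP(x,yz)}) gives $\langle\overline{s}^\rho s^\mu,s_\tau\rangle$, and after completeness you arrive at
\[
\langle\overline{s}^\rho s^\mu,\overline{s}_\sigma s^\nu\rangle+\langle\overline{s}^\rho s^\nu,\overline{s}_\sigma s^\mu\rangle,
\]
where the $\mu\leftrightarrow\nu$ swap is exactly the $u\leftrightarrow v$ swap in (\ref{Eq:IPxyuv}) with $x=\overline{s}^\rho$, $y=\overline{s}_\sigma$, $u=s^\mu$, $v=s^\nu$, yielding $2\langle\overline{s}^\rho,\overline{s}_\sigma\rangle\langle s^\mu,s^\nu\rangle=2g^{\mu\nu}\delta^\rho_\sigma$ at once. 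In your arrangement $\mu$ and $\nu$ end up in non-parallel slots (one as a left factor, the other as a right factor), so the pattern of (\ref{Eq:IPxyuv}) does not apply directly; fixing the initial slip removes the issue.
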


\begin{proof}
As the metric is real-valued, the assertion implied by the second equality
follows immediately from the assertion of the first line. It is thus
sufficient to prove the latter, say. By direct calculation:%
\begin{eqnarray*}
M^{\mu \rho }{}_{\tau }\left( M^{\nu \tau }{}_{\sigma }\right) ^{\ast
}+M^{\nu \rho }{}_{\tau }\left( M^{\mu \tau }{}_{\sigma }\right) ^{\ast }
&=&\left\langle s^{\mu },s^{\rho }s_{\tau }\right\rangle \left\langle 
\overline{s}^{\nu },\overline{s}^{\tau }\overline{s}_{\sigma }\right\rangle
+\left\langle s^{\nu },s^{\rho }s_{\tau }\right\rangle \left\langle 
\overline{s}^{\mu },\overline{s}^{\tau }\overline{s}_{\sigma }\right\rangle
\\
&=&\left\langle \overline{s}^{\rho }s^{\mu },s_{\tau }\right\rangle
\left\langle \overline{s}^{\nu }s_{\sigma },\overline{s}^{\tau
}\right\rangle +\left\langle \overline{s}^{\rho }s^{\nu },s_{\tau
}\right\rangle \left\langle \overline{s}^{\mu }s_{\sigma },\overline{s}%
^{\tau }\right\rangle \\
&=&\left\langle \overline{s}^{\rho }s^{\mu },s_{\tau }\right\rangle
\left\langle s^{\tau },\overline{s}_{\sigma }s^{\nu }\right\rangle
+\left\langle \overline{s}^{\rho }s^{\nu },s_{\tau }\right\rangle
\left\langle s^{\tau },\overline{s}_{\sigma }s^{\mu }\right\rangle \\
&=&\left\langle \overline{s}^{\rho }s^{\mu },\overline{s}_{\sigma }s^{\nu
}\right\rangle +\left\langle \overline{s}^{\rho }s^{\nu },\overline{s}%
_{\sigma }s^{\mu }\right\rangle \\
&=&2\left\langle \overline{s}^{\rho },\overline{s}_{\sigma }\right\rangle
\left\langle s^{\nu },s^{\mu }\right\rangle \\
&=&2\left\langle s^{\mu },s^{\nu }\right\rangle \left\langle s^{\rho
},s_{\sigma }\right\rangle \\
&=&2g^{\mu \nu }\delta _{\sigma }^{\rho },
\end{eqnarray*}%
using several of the properties of the inner product listed in Sec. \ref%
{Sec:Preliminaries}.
\end{proof}

In terms of $4\times 4$ matrices $\mathbf{M}^{\mu }$ with components $\left( 
\mathbf{M}^{\mu }\right) ^{\rho }{}_{\sigma }\equiv M^{\mu \rho }{}_{\sigma
} $, this algebra may also be written concisely in matrix notation as%
\begin{eqnarray}
2g^{\mu \nu }\mathbf{1} &=&\mathbf{M}^{\mu }\mathbf{M}^{\nu \ast }+\mathbf{M}%
^{\nu }\mathbf{M}^{\mu \ast }  \label{Eq:MMStar_Matrix} \\
&=&\mathbf{M}^{\mu \ast }\mathbf{M}^{\nu }+\mathbf{M}^{\nu \ast }\mathbf{M}%
^{\mu },  \label{Eq:MStarM_Matrix}
\end{eqnarray}%
where $\mathbf{1}$ is the identity matrix. Apart from the complex
conjugations, this algebra is the $Cl\left( 1,3\right) $ Clifford algebra,
and it may thus perhaps be called a 'modified Clifford algebra' (the algebra
may certainly have been studied somewhere in the literature, and thus have a
specific name, but the author is not aware of any such). The relevance of
this algebra will become clear shortly.

\section{'Modified Lorentz algebra'}

Consider the following complex-valued type $\left( 3,1\right) $ tensor field:%
\begin{eqnarray}
4S^{\mu \nu \rho }{}_{\sigma } &\equiv &\left\langle \overline{s}^{\mu
}s^{\rho },\overline{s}^{\nu }s_{\sigma }\right\rangle -\left\langle 
\overline{s}^{\nu }s^{\rho },\overline{s}^{\mu }s_{\sigma }\right\rangle
\label{Eq:SDef} \\
&\equiv &\left\langle s^{\rho },\left( s^{\mu }\overline{s}^{\nu }-s^{\nu }%
\overline{s}^{\mu }\right) s_{\sigma }\right\rangle ,  \notag
\end{eqnarray}

\begin{proposition}
\label{Prop:SAlgebra}$S^{\mu \nu \rho }{}_{\sigma }$ may be written in terms
of $M^{\mu }{}_{\rho \sigma }$ as follows:%
\begin{equation}
4S^{\mu \nu \rho }{}_{\sigma }=M^{\mu \rho }{}_{\tau }\left( M^{\nu \tau
}{}_{\sigma }\right) ^{\ast }-M^{\nu \rho }{}_{\tau }\left( M^{\mu \tau
}{}_{\sigma }\right) ^{\ast }.  \label{Eq:SMM}
\end{equation}
\end{proposition}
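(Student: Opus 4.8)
The plan is to compute the right-hand side of (\ref{Eq:SMM}) by expanding each $M$ in terms of the inner product, exactly as was done in the proof of Proposition \ref{Prop:MAlgebra}, and to recognize the result as the first line of the definition (\ref{Eq:SDef}). Concretely, I would start from
\[
M^{\mu \rho }{}_{\tau }\left( M^{\nu \tau }{}_{\sigma }\right) ^{\ast }
=\left\langle s^{\mu },s^{\rho }s_{\tau }\right\rangle \left\langle \overline{s}^{\nu },\overline{s}^{\tau }\overline{s}_{\sigma }\right\rangle ,
\]
using (\ref{Eq:MDef}) and (\ref{Eq:MComplexConj}) (equivalently the complex-conjugation identity $\left( M^{\nu \tau }{}_{\sigma }\right) ^{\ast }=\langle \overline{s}^{\nu },\overline{s}^{\tau }\overline{s}_{\sigma }\rangle$ from the line before Proposition \ref{Prop:MAlgebra}).

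Next I would move the factors $s^{\rho }$ and $\overline{s}_{\sigma }$ across the inner products using the module identities (\ref{Eq:IP(xy,z)})--(\ref{Eq:IP(x,yz)}): first rewrite $\langle s^{\mu },s^{\rho }s_{\tau }\rangle = \langle \overline{s}^{\rho }s^{\mu },s_{\tau }\rangle$ and $\langle \overline{s}^{\nu },\overline{s}^{\tau }\overline{s}_{\sigma }\rangle = \langle \overline{s}^{\nu }s_{\sigma },\overline{s}^{\tau }\rangle = \langle s^{\tau },\overline{s}_{\sigma }s^{\nu }\rangle$, and then collapse the contraction over $\tau $ by the completeness relation (\ref{Eq:CompletenessRelation}) (used with $s_{\tau }$, $s^{\tau }$ dummy), obtaining
\[
M^{\mu \rho }{}_{\tau }\left( M^{\nu \tau }{}_{\sigma }\right) ^{\ast }
=\left\langle \overline{s}^{\rho }s^{\mu },\overline{s}_{\sigma }s^{\nu }\right\rangle .
\]
This is precisely the structure appearing in the proof of Proposition \ref{Prop:MAlgebra} before the symmetrization step; the only change here is that I keep the antisymmetric combination rather than the symmetric one. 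Finally, applying (\ref{Eq:IP(xy,z)}) once more in the form $\langle \overline{s}^{\rho }s^{\mu },\overline{s}_{\sigma }s^{\nu }\rangle = \langle s^{\mu },s^{\rho }\overline{s}_{\sigma }s^{\nu }\rangle$ — or, more symmetrically, rewriting both terms as $\langle \overline{s}^{\mu }s^{\rho },\overline{s}^{\nu }s_{\sigma }\rangle$-type expressions via (\ref{Eq:IPConj}) and the module identities — I would match the difference
\[
M^{\mu \rho }{}_{\tau }\left( M^{\nu \tau }{}_{\sigma }\right) ^{\ast }-M^{\nu \rho }{}_{\tau }\left( M^{\mu \tau }{}_{\sigma }\right) ^{\ast }
=\left\langle \overline{s}^{\rho }s^{\mu },\overline{s}_{\sigma }s^{\nu }\right\rangle -\left\langle \overline{s}^{\rho }s^{\nu },\overline{s}_{\sigma }s^{\mu }\right\rangle
\]
with $4S^{\mu \nu \rho }{}_{\sigma }$ as given by (\ref{Eq:SDef}), after a cosmetic rearrangement of which arguments sit in which slot of $\langle \cdot ,\cdot \rangle$.

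I do not expect any real obstacle: this is essentially the computation already carried out for Proposition \ref{Prop:MAlgebra}, stopped one step earlier (before the use of (\ref{Eq:IPxyuv})) and with the antisymmetric rather than the symmetric combination retained. The only point requiring a little care is bookkeeping — making sure the chain of manipulations with (\ref{Eq:IP(xy,z)}), (\ref{Eq:IP(x,yz)}), (\ref{Eq:IPConj}) and (\ref{Eq:CompletenessRelation}) lands the arguments in exactly the slots of $\langle \cdot ,\cdot \rangle$ that appear in (\ref{Eq:SDef}), and that the overall factor of $\tfrac14$ is consistent. One could equally well prove the identity by verifying it directly on the explicit basis $s_{\mu }=(1,\mathrm{i}e_{i})$, but the inner-product manipulation above is cleaner and basis-independent.
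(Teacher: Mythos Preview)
Your proposal is correct and follows essentially the same route as the paper's own proof: expand the $M$'s via (\ref{Eq:MDef}) and (\ref{Eq:MComplexConj}), shuffle factors with (\ref{Eq:IP(xy,z)})--(\ref{Eq:IP(x,yz)}), collapse the $\tau$-sum with the completeness relation (\ref{Eq:CompletenessRelation}), and then use (\ref{Eq:IPConj}) for the final ``cosmetic rearrangement'' $\langle \overline{s}^{\rho }s^{\mu },\overline{s}_{\sigma }s^{\nu }\rangle=\langle \overline{s}^{\mu }s^{\rho },\overline{s}^{\nu }s_{\sigma }\rangle$ to match (\ref{Eq:SDef}). Your observation that this is just the Proposition~\ref{Prop:MAlgebra} computation stopped one step before invoking (\ref{Eq:IPxyuv}), with the antisymmetric rather than symmetric combination retained, is exactly right.
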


\begin{proof}
By direct calculation:%
\begin{eqnarray*}
M^{\mu \rho }{}_{\tau }\left( M^{\nu \tau }{}_{\sigma }\right) ^{\ast
}-M^{\nu \rho }{}_{\tau }\left( M^{\mu \tau }{}_{\sigma }\right) ^{\ast }
&=&\left\langle s^{\mu },s^{\rho }s_{\tau }\right\rangle \left\langle 
\overline{s}^{\nu },\overline{s}^{\tau }\overline{s}_{\sigma }\right\rangle
-\left\langle s^{\nu },s^{\rho }s_{\tau }\right\rangle \left\langle 
\overline{s}^{\mu },\overline{s}^{\tau }\overline{s}_{\sigma }\right\rangle
\\
&=&\left\langle \overline{s}^{\rho }s^{\mu },s_{\tau }\right\rangle
\left\langle \overline{s}^{\nu }s_{\sigma },\overline{s}^{\tau
}\right\rangle -\left\langle \overline{s}^{\rho }s^{\nu },s_{\tau
}\right\rangle \left\langle \overline{s}^{\mu }s_{\sigma },\overline{s}%
^{\tau }\right\rangle \\
&=&\left\langle \overline{s}^{\rho }s^{\mu },s_{\tau }\right\rangle
\left\langle s^{\tau },\overline{s}_{\sigma }s^{\nu }\right\rangle
-\left\langle \overline{s}^{\rho }s^{\nu },s_{\tau }\right\rangle
\left\langle s^{\tau },\overline{s}_{\sigma }s^{\mu }\right\rangle \\
&=&\left\langle \overline{s}^{\rho }s^{\mu },\overline{s}_{\sigma }s^{\nu
}\right\rangle -\left\langle \overline{s}^{\rho }s^{\nu },\overline{s}%
_{\sigma }s^{\mu }\right\rangle \\
&=&\left\langle \overline{s}^{\mu }s^{\rho },\overline{s}^{\nu }s_{\sigma
}\right\rangle -\left\langle \overline{s}^{\nu }s^{\rho },\overline{s}^{\mu
}s_{\sigma }\right\rangle ,
\end{eqnarray*}%
most of the steps being analogous to the ones taken in the proof of
Proposition \ref{Prop:MAlgebra}.
\end{proof}

In terms of $4\times 4$ matrices $\mathbf{S}^{\mu \nu }$ with components $%
\left( \mathbf{S}^{\mu \nu }\right) ^{\rho }{}_{\sigma }\equiv S^{\mu \nu
\rho }{}_{\sigma }$, Eq. (\ref{Eq:SMM}) may also be written concisely in
matrix notation as%
\begin{equation}
4\mathbf{S}^{\mu \nu }=\mathbf{M}^{\mu }\mathbf{M}^{\nu \ast }-\mathbf{M}%
^{\nu }\mathbf{M}^{\mu \ast },  \label{Eq:SMM_Matrix}
\end{equation}%
using as well the previously defined matrices $\mathbf{M}^{\mu }$.

\begin{proposition}
In conjunction, $\mathbf{S}^{\mu \nu }$ and $\mathbf{M}^{\rho }$ satisfy the
following identity:%
\begin{equation}
\mathbf{M}^{\rho }\mathbf{S}^{\mu \nu \ast }-\mathbf{S}^{\mu \nu }\mathbf{M}%
^{\rho }=g^{\rho \mu }\mathbf{M}^{\nu }-g^{\rho \nu }\mathbf{M}^{\mu }.
\label{Eq:DoubleCover}
\end{equation}%
This is in the present formalism the analogue of the identity $\left[ \gamma
^{\rho },S^{\mu \nu }\right] =\left( V^{\mu \nu }\right) ^{\rho }{}_{\sigma
}\gamma ^{\sigma }$ from the standard Dirac formalism \cite{Peskin and
Schroeder}.
\end{proposition}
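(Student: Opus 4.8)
The plan is to reduce the claim entirely to the modified Clifford algebra of Proposition~\ref{Prop:MAlgebra}, so that no property of the inner product $\left\langle \cdot ,\cdot \right\rangle $ needs to be invoked directly. First I would take the (entrywise) complex conjugate of the matrix identity (\ref{Eq:SMM_Matrix}) of Proposition~\ref{Prop:SAlgebra}; since $\ast $ distributes over matrix products without any transposition, this gives $4\mathbf{S}^{\mu \nu \ast }=\mathbf{M}^{\mu \ast }\mathbf{M}^{\nu }-\mathbf{M}^{\nu \ast }\mathbf{M}^{\mu }$. Substituting this expression together with (\ref{Eq:SMM_Matrix}) itself into the left-hand side of (\ref{Eq:DoubleCover}) and clearing the overall factor of $4$ recasts the assertion as the purely cubic statement
\[
\mathbf{M}^{\rho }\mathbf{M}^{\mu \ast }\mathbf{M}^{\nu }-\mathbf{M}^{\rho }\mathbf{M}^{\nu \ast }\mathbf{M}^{\mu }-\mathbf{M}^{\mu }\mathbf{M}^{\nu \ast }\mathbf{M}^{\rho }+\mathbf{M}^{\nu }\mathbf{M}^{\mu \ast }\mathbf{M}^{\rho }=4\bigl( g^{\rho \mu }\mathbf{M}^{\nu }-g^{\rho \nu }\mathbf{M}^{\mu }\bigr) ,
\]
which I would then establish using only (\ref{Eq:MMStar_Matrix}) and (\ref{Eq:MStarM_Matrix}).

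The key manoeuvre is to transport the leftmost factor of the first cubic term to the far right, two letters at a time. Writing (\ref{Eq:MMStar_Matrix}) in the form $\mathbf{M}^{\rho }\mathbf{M}^{\mu \ast }=2g^{\rho \mu }\mathbf{1}-\mathbf{M}^{\mu }\mathbf{M}^{\rho \ast }$ turns $\mathbf{M}^{\rho }\mathbf{M}^{\mu \ast }\mathbf{M}^{\nu }$ into $2g^{\rho \mu }\mathbf{M}^{\nu }-\mathbf{M}^{\mu }\mathbf{M}^{\rho \ast }\mathbf{M}^{\nu }$; then writing (\ref{Eq:MStarM_Matrix}) in the form $\mathbf{M}^{\rho \ast }\mathbf{M}^{\nu }=2g^{\rho \nu }\mathbf{1}-\mathbf{M}^{\nu \ast }\mathbf{M}^{\rho }$ converts this further into $2g^{\rho \mu }\mathbf{M}^{\nu }-2g^{\rho \nu }\mathbf{M}^{\mu }+\mathbf{M}^{\mu }\mathbf{M}^{\nu \ast }\mathbf{M}^{\rho }$. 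The residual cubic term here is exactly minus the third term in the display above, so the two cancel, leaving the scalar residue $2g^{\rho \mu }\mathbf{M}^{\nu }-2g^{\rho \nu }\mathbf{M}^{\mu }$. Since the whole identity is manifestly antisymmetric under $\mu \leftrightarrow \nu $, the second cubic term pairs off against the fourth with the same residue, and adding the two pairs produces $4\bigl( g^{\rho \mu }\mathbf{M}^{\nu }-g^{\rho \nu }\mathbf{M}^{\mu }\bigr) $; dividing by $4$ recovers (\ref{Eq:DoubleCover}).

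The one thing requiring care is the conjugation bookkeeping: each time a factor is pushed past its neighbour the position of the $\ast $ toggles, so one must alternate between the two (equivalent, by Proposition~\ref{Prop:MAlgebra}) forms (\ref{Eq:MMStar_Matrix}) and (\ref{Eq:MStarM_Matrix}); applying the wrong one leaves behind a product $\mathbf{M}\mathbf{M}^{\ast }$ or $\mathbf{M}^{\ast }\mathbf{M}$ of a type to which neither relation applies, stalling the reduction. Once the correct alternation is respected, all genuinely cubic contributions cancel in pairs and no real obstacle arises. In particular, in contrast with the standard Dirac derivation of $\left[ \gamma ^{\rho },S^{\mu \nu }\right] =\left( V^{\mu \nu }\right) ^{\rho }{}_{\sigma }\gamma ^{\sigma }$, no explicit realisation of the generators $\left( V^{\mu \nu }\right) ^{\rho }{}_{\sigma }=g^{\rho \mu }\delta _{\sigma }^{\nu }-g^{\rho \nu }\delta _{\sigma }^{\mu }$ of the vector representation needs to be supplied, as it emerges automatically from the metric contractions on the right-hand side.
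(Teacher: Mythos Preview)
Your proof is correct and follows essentially the same approach as the paper: both arguments rely solely on the definition $4\mathbf{S}^{\mu\nu}=\mathbf{M}^{\mu}\mathbf{M}^{\nu\ast}-\mathbf{M}^{\nu}\mathbf{M}^{\mu\ast}$ and the alternating application of the two forms (\ref{Eq:MMStar_Matrix}) and (\ref{Eq:MStarM_Matrix}) of the modified Clifford algebra to transport $\mathbf{M}^{\rho}$ through the product. The only organisational difference is that the paper expands just $\mathbf{S}^{\mu\nu}\mathbf{M}^{\rho}$ and reassembles $\mathbf{M}^{\rho}\mathbf{S}^{\mu\nu\ast}$ at the end, whereas you expand both $\mathbf{S}$-factors at the outset into a four-term cubic and cancel in pairs; the underlying manipulations are identical.
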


\begin{proof}
By direct calculation:%
\begin{eqnarray*}
4\mathbf{S}^{\mu \nu }\mathbf{M}^{\rho } &\equiv &\left( \mathbf{M}^{\mu }%
\mathbf{M}^{\nu \ast }-\mathbf{M}^{\nu }\mathbf{M}^{\mu \ast }\right) 
\mathbf{M}^{\rho } \\
&=&\mathbf{M}^{\mu }\left( -\mathbf{M}^{\rho \ast }\mathbf{M}^{\nu
}+2g^{\rho \nu }\mathbf{1}\right) -\mathbf{M}^{\nu }\left( -\mathbf{M}^{\rho
\ast }\mathbf{M}^{\mu }+2g^{\rho \mu }\mathbf{1}\right) \\
&=&-\mathbf{M}^{\mu }\mathbf{M}^{\rho \ast }\mathbf{M}^{\nu }+\mathbf{M}%
^{\nu }\mathbf{M}^{\rho \ast }\mathbf{M}^{\mu }-2g^{\rho \mu }\mathbf{M}%
^{\nu }+2g^{\rho \nu }\mathbf{M}^{\mu } \\
&=&-\left( -\mathbf{M}^{\rho }\mathbf{M}^{\mu \ast }+2g^{\rho \mu }\mathbf{1}%
\right) \mathbf{M}^{\nu }+\left( -\mathbf{M}^{\rho }\mathbf{M}^{\nu \ast
}+2g^{\rho \nu }\mathbf{1}\right) \mathbf{M}^{\mu }-2g^{\rho \mu }\mathbf{M}%
^{\nu }+2g^{\rho \nu }\mathbf{M}^{\mu } \\
&=&\mathbf{M}^{\rho }\left( \mathbf{M}^{\mu \ast }\mathbf{M}^{\nu }-\mathbf{M%
}^{\nu \ast }\mathbf{M}^{\mu }\right) -4g^{\rho \mu }\mathbf{M}^{\nu
}+4g^{\rho \nu }\mathbf{M}^{\mu } \\
&\equiv &4\mathbf{M}^{\rho }\mathbf{S}^{\mu \nu \ast }-4g^{\rho \mu }\mathbf{%
M}^{\nu }+4g^{\rho \nu }\mathbf{M}^{\mu },
\end{eqnarray*}%
using Eqs. (\ref{Eq:MMStar_Matrix})-(\ref{Eq:MStarM_Matrix}) and (\ref%
{Eq:SMM_Matrix}).
\end{proof}

\begin{proposition}
The matrices $\mathbf{S}^{\mu \nu }$ satisfy the following algebra:%
\begin{equation}
\left[ \mathbf{S}^{\mu \nu },\mathbf{S}^{\rho \sigma }\right] =-\left(
g^{\mu \rho }\mathbf{S}^{\nu \sigma }-g^{\mu \sigma }\mathbf{S}^{\nu \rho
}-g^{\nu \rho }\mathbf{S}^{\mu \sigma }+g^{\nu \sigma }\mathbf{S}^{\mu \rho
}\right) ,  \label{Eq:ModLorentzAlgebra}
\end{equation}%
this being the Lorentz algebra with $\eta ^{\mu \nu }$ replaced by $g^{\mu
\nu }$ (it may thus perhaps be called a 'modified Lorentz algebra').
\end{proposition}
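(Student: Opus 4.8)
The plan is to work entirely with the $4\times 4$ matrices, exploiting the representation $4\mathbf{S}^{\rho \sigma }=\mathbf{M}^{\rho }\mathbf{M}^{\sigma \ast }-\mathbf{M}^{\sigma }\mathbf{M}^{\rho \ast }$ of (\ref{Eq:SMM_Matrix}) together with the ``double cover'' identity (\ref{Eq:DoubleCover}) (whose own proof already incorporated the modified Clifford relations of Proposition~\ref{Prop:MAlgebra}). With the first of these, $4[\mathbf{S}^{\mu \nu },\mathbf{S}^{\rho \sigma }]=[\mathbf{S}^{\mu \nu },\mathbf{M}^{\rho }\mathbf{M}^{\sigma \ast }]-[\mathbf{S}^{\mu \nu },\mathbf{M}^{\sigma }\mathbf{M}^{\rho \ast }]$, so everything reduces to commuting one factor $\mathbf{S}^{\mu \nu }$ past products of the form $\mathbf{M}^{\alpha }\mathbf{M}^{\beta \ast }$.

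First I would rewrite (\ref{Eq:DoubleCover}) as $\mathbf{S}^{\mu \nu }\mathbf{M}^{\rho }=\mathbf{M}^{\rho }\mathbf{S}^{\mu \nu \ast }-g^{\rho \mu }\mathbf{M}^{\nu }+g^{\rho \nu }\mathbf{M}^{\mu }$ and, since the metric is real-valued, also record its complex conjugate, $\mathbf{S}^{\mu \nu \ast }\mathbf{M}^{\rho \ast }=\mathbf{M}^{\rho \ast }\mathbf{S}^{\mu \nu }-g^{\rho \mu }\mathbf{M}^{\nu \ast }+g^{\rho \nu }\mathbf{M}^{\mu \ast }$. Applying the former to push $\mathbf{S}^{\mu \nu }$ to the right of $\mathbf{M}^{\rho }$, and then the latter to push the resulting $\mathbf{S}^{\mu \nu \ast }$ to the right of $\mathbf{M}^{\sigma \ast }$, the piece that still carries an $\mathbf{S}$ factor comes back exactly in the form $\mathbf{M}^{\rho }\mathbf{M}^{\sigma \ast }\mathbf{S}^{\mu \nu }$ and hence cancels the corresponding term in $[\mathbf{S}^{\mu \nu },\mathbf{M}^{\rho }\mathbf{M}^{\sigma \ast }]$; what remains is a sum of four terms, each a metric coefficient times a product $\mathbf{M}^{\alpha }\mathbf{M}^{\beta \ast }$. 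The same computation with $\rho \leftrightarrow \sigma $ handles the second commutator.

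Subtracting the two contributions and regrouping the eight resulting $\mathbf{M}\mathbf{M}^{\ast }$-terms according to their metric prefactor ($g^{\mu \rho }$, $g^{\mu \sigma }$, $g^{\nu \rho }$, or $g^{\nu \sigma }$), each pair reassembles --- via (\ref{Eq:SMM_Matrix}) once more, using the antisymmetry $\mathbf{S}^{\alpha \beta }=-\mathbf{S}^{\beta \alpha }$ to match signs --- into a single matrix $\mathbf{S}$. A common factor of $4$ then cancels from both sides, leaving precisely (\ref{Eq:ModLorentzAlgebra}).

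The one delicate point --- essentially the only obstacle --- is the bookkeeping of the complex conjugations: since (\ref{Eq:DoubleCover}) intertwines $\mathbf{S}^{\mu \nu }$ with $\mathbf{S}^{\mu \nu \ast }$ rather than with itself, one must genuinely use that identity in tandem with its conjugate and verify that the conjugation introduced by the first move is undone by the second, so that it is the \emph{un}-conjugated $\mathbf{S}$'s that survive on the right-hand side. Once that cancellation is in place, expanding the commutator into its four terms and sorting by metric coefficient is routine. (One could instead bypass the matrices and evaluate the commutator directly from the inner-product definition (\ref{Eq:SDef}) using the completeness relation (\ref{Eq:CompletenessRelation}), but the matrix route above is shorter.)
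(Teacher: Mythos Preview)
Your proposal is correct and follows essentially the same route as the paper: the paper likewise pushes $\mathbf{S}^{\mu\nu}$ through $\mathbf{M}^{\rho}\mathbf{M}^{\sigma\ast}$ using Eq.~(\ref{Eq:DoubleCover}) and its complex conjugate, obtains $\mathbf{M}^{\rho}\mathbf{M}^{\sigma\ast}\mathbf{S}^{\mu\nu}$ plus four metric-weighted $\mathbf{M}\mathbf{M}^{\ast}$ terms, then antisymmetrizes in $\rho\sigma$ and reassembles via Eq.~(\ref{Eq:SMM_Matrix}). Your emphasis on the conjugation bookkeeping---that the two successive moves restore the unstarred $\mathbf{S}^{\mu\nu}$---is exactly the point that makes the argument work.
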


\begin{proof}
By direct calculation: From%
\begin{eqnarray*}
\mathbf{S}^{\mu \nu }\mathbf{M}^{\rho }\mathbf{M}^{\sigma \ast } &=&\left[ 
\mathbf{M}^{\rho }\mathbf{S}^{\mu \nu \ast }-\left( g^{\mu \rho }\mathbf{M}%
^{\nu }-g^{\nu \rho }\mathbf{M}^{\mu }\right) \right] \mathbf{M}^{\sigma
\ast } \\
&=&\mathbf{M}^{\rho }\mathbf{S}^{\mu \nu \ast }\mathbf{M}^{\sigma \ast
}-g^{\mu \rho }\mathbf{M}^{\nu }\mathbf{M}^{\sigma \ast }+g^{\nu \rho }%
\mathbf{M}^{\mu }\mathbf{M}^{\sigma \ast } \\
&=&\mathbf{M}^{\rho }\left[ \mathbf{M}^{\sigma \ast }\mathbf{S}^{\mu \nu
}-\left( g^{\mu \sigma }\mathbf{M}^{\nu \ast }-g^{\nu \sigma }\mathbf{M}%
^{\mu \ast }\right) \right] -g^{\mu \rho }\mathbf{M}^{\nu }\mathbf{M}%
^{\sigma \ast }+g^{\nu \rho }\mathbf{M}^{\mu }\mathbf{M}^{\sigma \ast } \\
&=&\mathbf{M}^{\rho }\mathbf{M}^{\sigma \ast }\mathbf{S}^{\mu \nu }-g^{\mu
\sigma }\mathbf{M}^{\rho }\mathbf{M}^{\nu \ast }+g^{\nu \sigma }\mathbf{M}%
^{\rho }\mathbf{M}^{\mu \ast }-g^{\mu \rho }\mathbf{M}^{\nu }\mathbf{M}%
^{\sigma \ast }+g^{\nu \rho }\mathbf{M}^{\mu }\mathbf{M}^{\sigma \ast },
\end{eqnarray*}%
using Eq. (\ref{Eq:DoubleCover}) and its complex conjugate, follows%
\begin{eqnarray*}
4\mathbf{S}^{\mu \nu }\mathbf{S}^{\rho \sigma } &\equiv &\mathbf{S}^{\mu \nu
}\left( \mathbf{M}^{\rho }\mathbf{M}^{\sigma \ast }-\mathbf{M}^{\sigma }%
\mathbf{M}^{\rho \ast }\right) \\
&=&\left( \mathbf{M}^{\rho }\mathbf{M}^{\sigma \ast }\mathbf{S}^{\mu \nu
}-g^{\mu \sigma }\mathbf{M}^{\rho }\mathbf{M}^{\nu \ast }+g^{\nu \sigma }%
\mathbf{M}^{\rho }\mathbf{M}^{\mu \ast }-g^{\mu \rho }\mathbf{M}^{\nu }%
\mathbf{M}^{\sigma \ast }+g^{\nu \rho }\mathbf{M}^{\mu }\mathbf{M}^{\sigma
\ast }\right) \\
&&-\left( \mathbf{M}^{\sigma }\mathbf{M}^{\rho \ast }\mathbf{S}^{\mu \nu
}-g^{\mu \rho }\mathbf{M}^{\sigma }\mathbf{M}^{\nu \ast }+g^{\nu \rho }%
\mathbf{M}^{\sigma }\mathbf{M}^{\mu \ast }-g^{\mu \sigma }\mathbf{M}^{\nu }%
\mathbf{M}^{\rho \ast }+g^{\nu \sigma }\mathbf{M}^{\mu }\mathbf{M}^{\rho
\ast }\right) \\
&=&\left( \mathbf{M}^{\rho }\mathbf{M}^{\sigma \ast }-\mathbf{M}^{\sigma }%
\mathbf{M}^{\rho \ast }\right) \mathbf{S}^{\mu \nu }-g^{\mu \rho }\left( 
\mathbf{M}^{\nu }\mathbf{M}^{\sigma \ast }-\mathbf{M}^{\sigma }\mathbf{M}%
^{\nu \ast }\right) +g^{\mu \sigma }\left( \mathbf{M}^{\nu }\mathbf{M}^{\rho
\ast }-\mathbf{M}^{\rho }\mathbf{M}^{\nu \ast }\right) \\
&&+g^{\nu \rho }\left( \mathbf{M}^{\mu }\mathbf{M}^{\sigma \ast }-\mathbf{M}%
^{\sigma }\mathbf{M}^{\mu \ast }\right) -g^{\nu \sigma }\left( \mathbf{M}%
^{\mu }\mathbf{M}^{\rho \ast }-\mathbf{M}^{\rho }\mathbf{M}^{\mu \ast
}\right) \\
&=&4\mathbf{S}^{\rho \sigma }\mathbf{S}^{\mu \nu }-4\left( g^{\mu \rho }%
\mathbf{S}^{\nu \sigma }-g^{\mu \sigma }\mathbf{S}^{\nu \rho }-g^{\nu \rho }%
\mathbf{S}^{\mu \sigma }+g^{\nu \sigma }\mathbf{S}^{\mu \rho }\right) .
\end{eqnarray*}%
\noindent This proof is structurally quite analogous to the proof in the
standard Dirac formalism that $S^{\mu \nu }=\frac{1}{4}\left[ \gamma ^{\mu
},\gamma ^{\nu }\right] $ satisfies the standard Lorentz algebra.
\end{proof}

Consider as well the following type $\left( 3,1\right) $ tensor field:%
\begin{eqnarray}
V^{\mu \nu \rho }{}_{\sigma } &\equiv &\left\langle s^{\mu },s^{\rho
}\right\rangle \left\langle s^{\nu },s_{\sigma }\right\rangle -\left\langle
s^{\nu },s^{\rho }\right\rangle \left\langle s^{\mu },s_{\sigma
}\right\rangle  \notag \\
&=&g^{\mu \rho }\delta _{\sigma }^{\nu }-g^{\nu \rho }\delta _{\sigma }^{\mu
}.  \label{Eq:VDef}
\end{eqnarray}%
The corresponding $4\times 4$ matrices $\mathbf{V}^{\mu \nu }$ with
components $\left( \mathbf{V}^{\mu \nu }\right) ^{\rho }{}_{\sigma }\equiv
V^{\mu \nu \rho }{}_{\sigma }$ are readily shown to satisfy the very same
algebra as do $\mathbf{S}^{\mu \nu }$, Eq. (\ref{Eq:ModLorentzAlgebra}).

Note that in a local inertial frame in which $\left\langle s^{\mu },s^{\nu
}\right\rangle =\eta ^{\mu \nu }$, the algebra Eq. (\ref%
{Eq:ModLorentzAlgebra}), with spacetime-dependent structure constants,
reduces to the standard Lorentz algebra, with spacetime-independent
structure constants. So in a local inertial frame, where $g^{\mu \nu }=\eta
^{\mu \nu }$, the quantities $\mathbf{S}^{\mu \nu }$ and $\mathbf{V}^{\mu
\nu }$ are ordinary representations of the standard (un-modified) Lorentz
algebra. More specifically, $\mathbf{S}^{\mu \nu }$ is a (spin $\frac{1}{2}$%
) spinor representation, and $\mathbf{V}^{\mu \nu }$ is a vector
representation, these assertions being readily established by calculating
their corresponding Casimir operators, $\frac{1}{2}\mathbf{S}^{ij}\mathbf{S}%
_{ij}$ and $\frac{1}{2}\mathbf{V}^{ij}\mathbf{V}_{ij}$, respectively, for
the $\mathrm{SO}\left( 3\right) $ subgroup of the Lorentz group. Note that
in this article, a notation without explicit $\mathrm{i}$'s in the
definition of the generators and, correspondingly, in the Lie algebra is
used.

\begin{proposition}
$\left( \mathbf{S}^{\mu \nu }\right) ^{\rho }{}_{\sigma }$ and $\left( 
\mathbf{V}^{\mu \nu }\right) ^{\rho }{}_{\sigma }$ satisfy the following
identities:%
\begin{eqnarray}
g_{\sigma \alpha }\left( \mathbf{S}^{\mu \nu }\right) ^{\alpha }{}_{\beta
}g^{\beta \rho } &=&-\left( \mathbf{S}^{\mu \nu }\right) ^{\rho }{}_{\sigma
},  \label{Eq:gSg} \\
g_{\sigma \alpha }\left( \mathbf{V}^{\mu \nu }\right) ^{\alpha }{}_{\beta
}g^{\beta \rho } &=&-\left( \mathbf{V}^{\mu \nu }\right) ^{\rho }{}_{\sigma
}.  \label{Eq:gVg}
\end{eqnarray}
\end{proposition}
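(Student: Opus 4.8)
The plan is to reduce both identities to a single structural fact: that the fully covariant forms of the tensors, obtained by lowering the third index, are antisymmetric under exchange of the last two indices. Concretely, recalling $(\mathbf{S}^{\mu\nu})^{\rho}{}_{\sigma}=S^{\mu\nu\rho}{}_{\sigma}$, I would set $\widetilde{S}^{\mu\nu}{}_{\rho\sigma}\equiv g_{\rho\tau}S^{\mu\nu\tau}{}_{\sigma}$, and similarly $\widetilde{V}^{\mu\nu}{}_{\rho\sigma}\equiv g_{\rho\tau}V^{\mu\nu\tau}{}_{\sigma}$. Contracting Eq.~(\ref{Eq:gSg}) with $g_{\rho\gamma}$ and using $g^{\beta\rho}g_{\rho\gamma}=\delta^{\beta}_{\gamma}$ turns it into $\widetilde{S}^{\mu\nu}{}_{\sigma\gamma}=-\widetilde{S}^{\mu\nu}{}_{\gamma\sigma}$; conversely, multiplying this antisymmetry relation by $g^{\gamma\rho}$ reproduces Eq.~(\ref{Eq:gSg}), the two being equivalent because $g^{\mu\nu}$ is a nonsingular symmetric matrix. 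The same reduction applies verbatim to Eq.~(\ref{Eq:gVg}). So it suffices to show that $\widetilde{S}^{\mu\nu}{}_{\rho\sigma}$ and $\widetilde{V}^{\mu\nu}{}_{\rho\sigma}$ are antisymmetric under $\rho\leftrightarrow\sigma$.

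For $\mathbf{V}^{\mu\nu}$ this is immediate: from the explicit form Eq.~(\ref{Eq:VDef}), $V^{\mu\nu\rho}{}_{\sigma}=g^{\mu\rho}\delta^{\nu}_{\sigma}-g^{\nu\rho}\delta^{\mu}_{\sigma}$, so lowering the third index gives $\widetilde{V}^{\mu\nu}{}_{\rho\sigma}=\delta^{\mu}_{\rho}\delta^{\nu}_{\sigma}-\delta^{\nu}_{\rho}\delta^{\mu}_{\sigma}$, manifestly antisymmetric in $\rho\leftrightarrow\sigma$. Equivalently one may simply insert Eq.~(\ref{Eq:VDef}) directly into the left-hand side of Eq.~(\ref{Eq:gVg}), contract the Kronecker deltas against the metrics, and read off $-(\mathbf{V}^{\mu\nu})^{\rho}{}_{\sigma}$ in one line.

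For $\mathbf{S}^{\mu\nu}$ I would work from the second form of the definition Eq.~(\ref{Eq:SDef}), $4S^{\mu\nu\rho}{}_{\sigma}=\langle s^{\rho},(s^{\mu}\overline{s}^{\nu}-s^{\nu}\overline{s}^{\mu})s_{\sigma}\rangle$. Lowering the third index with $g_{\rho\tau}$ and using $g_{\rho\tau}s^{\tau}=s_{\rho}$ gives $4\widetilde{S}^{\mu\nu}{}_{\rho\sigma}=\langle s_{\rho},A\,s_{\sigma}\rangle$ with $A\equiv s^{\mu}\overline{s}^{\nu}-s^{\nu}\overline{s}^{\mu}$, and $A$ is anti-self-conjugate, $\overline{A}=-A$, since $\overline{xy}=\overline{y}\,\overline{x}$ and $\overline{\overline{s}^{\nu}}=s^{\nu}$ give $\overline{s^{\mu}\overline{s}^{\nu}}=s^{\nu}\overline{s}^{\mu}$. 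Peeling $s_{\sigma}$ off the right-hand slot with Eq.~(\ref{Eq:IP(x,yz)}) yields $4\widetilde{S}^{\mu\nu}{}_{\rho\sigma}=\langle s_{\rho}\overline{s}_{\sigma},A\rangle$; then, applying Eq.~(\ref{Eq:IPConj}) together with $\overline{s_{\sigma}\overline{s}_{\rho}}=s_{\rho}\overline{s}_{\sigma}$ and $\overline{A}=-A$, one gets $4\widetilde{S}^{\mu\nu}{}_{\sigma\rho}=\langle s_{\sigma}\overline{s}_{\rho},A\rangle=\langle\overline{s_{\sigma}\overline{s}_{\rho}},\overline{A}\rangle=-\langle s_{\rho}\overline{s}_{\sigma},A\rangle=-4\widetilde{S}^{\mu\nu}{}_{\rho\sigma}$, which is the required antisymmetry. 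An alternative, should one prefer to avoid conjugations, is to start from the first form of Eq.~(\ref{Eq:SDef}): writing $4(\widetilde{S}^{\mu\nu}{}_{\rho\sigma}+\widetilde{S}^{\mu\nu}{}_{\sigma\rho})$ and grouping terms by their first argument, Eq.~(\ref{Eq:IPxyuv}) collapses the two groups to $2\langle\overline{s}^{\mu},\overline{s}^{\nu}\rangle\langle s_{\rho},s_{\sigma}\rangle$ and $2\langle\overline{s}^{\nu},\overline{s}^{\mu}\rangle\langle s_{\rho},s_{\sigma}\rangle$ respectively, which cancel by the symmetry Eq.~(\ref{Eq:IPSym}).

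I do not expect a genuine obstacle here; the computation is short once the tensors are put in all-covariant form. The only points needing care are the index bookkeeping — several of the indices already sit in the upper position, so one must be deliberate about which index is raised or lowered at each step — and applying quaternionic conjugation to the whole product $s_{\sigma}\overline{s}_{\rho}$, invoking $\overline{xy}=\overline{y}\,\overline{x}$, rather than factor by factor.
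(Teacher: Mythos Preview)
Your proposal is correct and uses essentially the same mechanism as the paper: the key step in both is that $A\equiv s^{\mu}\overline{s}^{\nu}-s^{\nu}\overline{s}^{\mu}$ satisfies $\overline{A}=-A$, combined with the inner-product identities Eqs.~(\ref{Eq:IPSym})--(\ref{Eq:IP(x,yz)}). The paper works directly with $g_{\sigma\alpha}(\mathbf{S}^{\mu\nu})^{\alpha}{}_{\beta}g^{\beta\rho}=\langle s_{\sigma},A\,s^{\rho}\rangle$ and moves $A$ across via $\langle x,yz\rangle=\langle\overline{y}x,z\rangle$, whereas you first recast the identity as antisymmetry of the fully lowered tensor and then peel off $s_{\sigma}$ via $\langle x,yz\rangle=\langle x\overline{z},y\rangle$ before invoking Eq.~(\ref{Eq:IPConj}); these are minor reorderings of the same computation.
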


\begin{proof}
As Eq. (\ref{Eq:gVg}) is just the generalization to curvilinear coordinates
of the well-known identity from special relativity responsible for the
invariance of the line element, only the proof of Eq. (\ref{Eq:gSg}) will be
given. By direct calculation:%
\begin{eqnarray*}
4g_{\sigma \alpha }\left( \mathbf{S}^{\mu \nu }\right) ^{\alpha }{}_{\beta
}g^{\beta \rho } &=&g_{\sigma \alpha }\left\langle s^{\alpha },\left( s^{\mu
}\overline{s}^{\nu }-s^{\nu }\overline{s}^{\mu }\right) s_{\beta
}\right\rangle g^{\beta \rho } \\
&=&\left\langle s_{\sigma },\left( s^{\mu }\overline{s}^{\nu }-s^{\nu }%
\overline{s}^{\mu }\right) s^{\rho }\right\rangle \\
&=&-\left\langle \left( s^{\mu }\overline{s}^{\nu }-s^{\nu }\overline{s}%
^{\mu }\right) s_{\sigma },s^{\rho }\right\rangle \\
&=&-\left\langle s^{\rho },\left( s^{\mu }\overline{s}^{\nu }-s^{\nu }%
\overline{s}^{\mu }\right) s_{\sigma }\right\rangle \\
&=&-\left( \mathbf{S}^{\mu \nu }\right) ^{\rho }{}_{\sigma },
\end{eqnarray*}%
using Eqs. (\ref{Eq:IPSym}) and (\ref{Eq:IP(x,yz)}).
\end{proof}

\begin{proposition}
The following identity holds:%
\begin{eqnarray}
2\left( \mathbf{S}^{\mu \nu }\right) ^{\rho \sigma } &=&g^{\mu \rho }g^{\nu
\sigma }-g^{\nu \rho }g^{\mu \sigma }\pm \mathrm{i}\varepsilon ^{\mu \nu
\rho \sigma }  \notag \\
&\equiv &\left( \mathbf{V}^{\mu \nu }\right) ^{\rho \sigma }\pm \mathrm{i}%
\varepsilon ^{\mu \nu \rho \sigma },  \label{Eq:SVLeviCivita}
\end{eqnarray}%
for either the plus or the minus sign. Here, $\varepsilon ^{\mu \nu \rho
\sigma }$ is the Levi-Civita tensor of type $\left( 4,0\right) $ \cite[p. 202%
]{MTW}.
\end{proposition}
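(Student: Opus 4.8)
The plan is to show that the type $\left( 4,0\right) $ tensor field $R^{\mu
\nu \rho \sigma }\equiv 2\left( \mathbf{S}^{\mu \nu }\right) ^{\rho \sigma
}-\left( \mathbf{V}^{\mu \nu }\right) ^{\rho \sigma }$ is completely
antisymmetric in its four world indices; being, in four dimensions, a
completely antisymmetric type $\left( 4,0\right) $ tensor, it must then be
proportional to $\varepsilon ^{\mu \nu \rho \sigma }$, and the (necessarily
constant) factor of proportionality is finally fixed to $\pm \mathrm{i}$ by
evaluating a single component. As a preparatory step I would rewrite $\left(
\mathbf{S}^{\mu \nu }\right) ^{\rho \sigma }$ conveniently: putting $\Sigma
^{\mu \nu }\equiv s^{\mu }\overline{s}^{\nu }-s^{\nu }\overline{s}^{\mu }$,
one has $\overline{\Sigma }^{\mu \nu }=-\Sigma ^{\mu \nu }$ and $\mathrm{Scal
}\left( \Sigma ^{\mu \nu }\right) =\left\langle s^{\mu },s^{\nu
}\right\rangle -\left\langle s^{\nu },s^{\mu }\right\rangle =0$ by Eq. (\ref
{Eq:IPSym}); raising $\sigma $ in the second form of Eq. (\ref{Eq:SDef}),
using Eqs. (\ref{Eq:IPSym}) and (\ref{Eq:IP(x,yz)}) and inserting the split $
s^{\rho }\overline{s}^{\sigma }=g^{\rho \sigma }+\tfrac{1}{2}\Sigma ^{\rho
\sigma }$ (whose symmetric part is $\left\langle s^{\rho },s^{\sigma
}\right\rangle $ by Eq. (\ref{Eq:IPDef})), one obtains $8\left( \mathbf{S}
^{\mu \nu }\right) ^{\rho \sigma }=\left\langle \Sigma ^{\mu \nu },\Sigma
^{\rho \sigma }\right\rangle $ (the term proportional to $g^{\rho \sigma }$
dropping out since $\mathrm{Scal}\left( \Sigma ^{\mu \nu }\right) =0$). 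In
this form $\left( \mathbf{S}^{\mu \nu }\right) ^{\rho \sigma }$ is visibly
antisymmetric in $\left[ \mu \nu \right] $, antisymmetric in $\left[ \rho
\sigma \right] $ (which is also Eq. (\ref{Eq:gSg})), and symmetric under the
pair exchange $\left( \mu \nu \right) \leftrightarrow \left( \rho \sigma
\right) $; the same three properties hold trivially for $\left( \mathbf{V}
^{\mu \nu }\right) ^{\rho \sigma }=g^{\mu \rho }g^{\nu \sigma }-g^{\nu \rho
}g^{\mu \sigma }$, hence for $R^{\mu \nu \rho \sigma }$.

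The crux is then the one remaining symmetry, $R^{\mu \nu \rho \sigma
}=-R^{\mu \rho \nu \sigma }$. With $P^{\mu \nu \rho \sigma }\equiv
\left\langle s^{\mu }\overline{s}^{\nu },s^{\rho }\overline{s}^{\sigma
}\right\rangle $, Eqs. (\ref{Eq:IPConj}) and (\ref{Eq:IPxyuv}) say precisely
that the part of $P^{\mu \nu \rho \sigma }$ symmetric under $\nu
\leftrightarrow \sigma $ equals $g^{\mu \rho }g^{\nu \sigma }$; by a similar
use of Eqs. (\ref{Eq:IPSym})--(\ref{Eq:IPxyuv}) (first moving factors about,
then re-applying Eq. (\ref{Eq:IPxyuv})) one obtains, likewise, that the
parts of $P^{\mu \nu \rho \sigma }$ symmetric under $\mu \leftrightarrow \nu
$ and under $\rho \leftrightarrow \sigma $ both equal $g^{\mu \nu }g^{\rho
\sigma }$. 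Expanding $\left\langle \Sigma ^{\mu \nu },\Sigma ^{\rho \sigma
}\right\rangle +\left\langle \Sigma ^{\mu \rho },\Sigma ^{\nu \sigma
}\right\rangle $ into eight $P$-terms and running it through these three
``symmetrization identities'' collapses the sum to $4\left( g^{\mu \rho
}g^{\nu \sigma }+g^{\mu \nu }g^{\rho \sigma }-2g^{\nu \rho }g^{\mu \sigma
}\right) $; but its left-hand side equals $8\left[ \left( \mathbf{S}^{\mu \nu
}\right) ^{\rho \sigma }+\left( \mathbf{S}^{\mu \rho }\right) ^{\nu \sigma
}\right] $ and its right-hand side equals $4\left[ \left( \mathbf{V}^{\mu \nu
}\right) ^{\rho \sigma }+\left( \mathbf{V}^{\mu \rho }\right) ^{\nu \sigma
}\right] $, so this is just $2\left( \mathbf{S}^{\mu \nu }\right) ^{\rho
\sigma }+2\left( \mathbf{S}^{\mu \rho }\right) ^{\nu \sigma }=\left( \mathbf{V
}^{\mu \nu }\right) ^{\rho \sigma }+\left( \mathbf{V}^{\mu \rho }\right)
^{\nu \sigma }$, i.e.\ $R^{\mu \nu \rho \sigma }+R^{\mu \rho \nu \sigma }=0$.

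Since $R^{\mu \nu \rho \sigma }$ now changes sign under each of the adjacent
transpositions $\mu \leftrightarrow \nu $, $\nu \leftrightarrow \rho $, $\rho
\leftrightarrow \sigma $, which together generate the whole symmetric group
on the four indices, it is totally antisymmetric, and hence $R^{\mu \nu \rho
\sigma }=c\,\varepsilon ^{\mu \nu \rho \sigma }$ for some scalar field $c$.
To evaluate $c$ I would pass to the frame $s_{\mu }=\left( 1,\mathrm{i}
e_{i}\right) $ of Sec. \ref{Sec:Preliminaries}, in which $g_{\mu \nu }=\eta
_{\mu \nu }$, $\Sigma ^{01}=2\mathrm{i}e_{1}$, $\Sigma
^{23}=e_{2}e_{3}-e_{3}e_{2}=2e_{1}$ and $\left\langle e_{i},e_{j}\right
\rangle =\delta _{ij}$, so that $8\left( \mathbf{S}^{01}\right)
^{23}=\left\langle \Sigma ^{01},\Sigma ^{23}\right\rangle =4\mathrm{i}$ while
$\left( \mathbf{V}^{01}\right) ^{23}=0$; thus $R^{0123}=\mathrm{i}$, and
comparison with $c\,\varepsilon ^{0123}$ gives $c=\pm \mathrm{i}$ according
to the sign of $\varepsilon ^{0123}$ in whatever convention is chosen for
the Levi-Civita tensor -- precisely the sign freedom asserted in the
statement.

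The step I expect to be most delicate is the one in the second paragraph:
one must keep careful track of which of the relations (\ref{Eq:IPSym})--(\ref
{Eq:IPxyuv}) applies to which grouping of factors among the eight $P$-terms
(and, in the last paragraph, of the Levi-Civita sign convention). A more
pedestrian route, bypassing those symmetrization lemmas, is to note at the
outset that both sides of the asserted identity are type $\left( 4,0\right) $
tensor fields and may therefore be compared in the single frame $s_{\mu
}=\left( 1,\mathrm{i}e_{i}\right) $: there one lists $\Sigma ^{0k}=2\mathrm{i}
e_{k}$ and $\Sigma ^{ij}=2\varepsilon _{ijk}e_{k}$ (and $\Sigma ^{00}=0$),
computes all components of $\left( \mathbf{S}^{\mu \nu }\right) ^{\rho \sigma
}=\tfrac{1}{8}\left\langle \Sigma ^{\mu \nu },\Sigma ^{\rho \sigma
}\right\rangle $ from $\left\langle e_{i},e_{j}\right\rangle =\delta _{ij}$,
and checks them term by term against $\left( \mathbf{V}^{\mu \nu }\right)
^{\rho \sigma }\pm \mathrm{i}\varepsilon ^{\mu \nu \rho \sigma }$, the
components with one temporal and three spatial indices fixing the sign and
the purely spatial ones coming out automatically.
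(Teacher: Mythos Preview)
Your proof is correct and reaches the same conclusion as the paper's, but by a genuinely different route. The paper introduces an auxiliary ``triple cross product'' map
\[
3!\,X(x,y,z)\equiv (x\overline{y}-y\overline{x})z+(y\overline{z}-z\overline{y})x+(z\overline{x}-x\overline{z})y,
\]
proves that $\langle X(x,y,z),u\rangle$ is totally antisymmetric in its four arguments, and then identifies $-\langle X(s^{\mu},s^{\nu},s^{\rho}),s^{\sigma}\rangle$ with $2(\mathbf{S}^{\mu\nu})^{\rho\sigma}-(\mathbf{V}^{\mu\nu})^{\rho\sigma}$. You instead derive the compact formula $8(\mathbf{S}^{\mu\nu})^{\rho\sigma}=\langle\Sigma^{\mu\nu},\Sigma^{\rho\sigma}\rangle$ and get the missing antisymmetry $R^{\mu\nu\rho\sigma}+R^{\mu\rho\nu\sigma}=0$ directly from the composition-algebra polarization identity, Eq.~(\ref{Eq:IPxyuv}), applied to $P^{\mu\nu\rho\sigma}=\langle s^{\mu}\overline{s}^{\nu},s^{\rho}\overline{s}^{\sigma}\rangle$. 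Both arguments hinge on Eq.~(\ref{Eq:IPxyuv}); yours is a little more economical in that it dispenses with the triple-product machinery, while the paper's has the side benefit of making contact with the (independent) theory of three-factor cross products on $\mathbb{R}^{4}$ and $\mathbb{R}^{8}$. The final step --- fixing the proportionality constant by evaluation at a particular $s_{\mu}$ --- is handled the same way (and at the same level of rigor) in both.
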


It was this identity that originally inspired the investigations resulting
in the findings reported in this article. A sign change of the Levi-Civita
tensor term corresponds to the (parity) transformation $s_{\mu }\rightarrow 
\overline{s}_{\mu }=s_{\mu }^{\ast }$, which does not change the metric and
thus neither $\left( \mathbf{V}^{\mu \nu }\right) ^{\rho \sigma }$ nor $%
\varepsilon ^{\mu \nu \rho \sigma }$ (the latter dedending only on the
determinant of the metric), but changes $\left( \mathbf{S}^{\mu \nu }\right)
^{\rho \sigma }$ into its complex conjugate. The choice of 'handedness' of $%
s_{\mu }$ is freely disposable, but once taken, it must of course be
consistently adhered to throughout.

Before giving the proof of Eq. (\ref{Eq:SVLeviCivita}), first a little
preparation, introducing some auxiliary machinery: Consider the map $%
X:\left( \mathbb{C}\otimes \mathbb{H}\right) ^{3}\rightarrow \mathbb{C}%
\otimes \mathbb{H}$ given by%
\begin{equation}
3!X\left( x,y,z\right) \equiv \left( x\overline{y}-y\overline{x}\right)
z+\left( y\overline{z}-z\overline{y}\right) x+\left( z\overline{x}-x%
\overline{z}\right) y,  \label{Eq:XDef}
\end{equation}%
which by construction is completely antisymmetric in its three arguments. An
aside: It satisfies the following orthogonality property and (generalized)
Pythagorean property, respectively:%
\begin{eqnarray*}
0 &=&\left\langle X\left( x_{1},x_{2},x_{3}\right) ,x_{i}\right\rangle , \\
\det \left( \left\langle x_{i},x_{j}\right\rangle \right) &=&\left\langle
X\left( x_{1},x_{2},x_{3}\right) ,X\left( x_{1},x_{2},x_{3}\right)
\right\rangle ,
\end{eqnarray*}%
for any $x_{i}\in \mathbb{C}\otimes \mathbb{H}$, where $i=1,2,3$, of course.
It corresponds to the complexification of a triple cross product on $\mathbb{%
R}^{4}$, taking three vectors and producing a single vector. Such cross
products with three factors, possessing these two properties, are possible
only in $\mathbb{R}^{4}$ and $\mathbb{R}^{8}$ \cite[Sec. 7.5]{Lounesto}, and
by complexification in $\mathbb{C}^{4}$ and $\mathbb{C}^{8}$, the underlying
reason being the existence of the division algebras $\mathbb{H}$ and $%
\mathbb{O}$.

The expanded right-hand side of Eq. (\ref{Eq:XDef}) consists of six terms,
of course. By using $2\left\langle x,y\right\rangle \equiv x\overline{y}+y%
\overline{x}\equiv \overline{x}y+\overline{y}x$, etc., repeatedly to
rearrange all terms as the first term (plus some extra) yields the following
equivalent expression:%
\begin{equation}
X\left( x,y,z\right) =x\overline{y}z-\left\langle y,z\right\rangle
x+\left\langle z,x\right\rangle y-\left\langle x,y\right\rangle z.
\label{Eq:XAlternative}
\end{equation}%
From it follows that (note interchange of $z$ and $u$ in the second relation
as compared to the first)%
\begin{eqnarray*}
\left\langle X\left( x,y,z\right) ,u\right\rangle &=&\left\langle x\overline{%
y}z,u\right\rangle -\left\langle y,z\right\rangle \left\langle
x,u\right\rangle +\left\langle z,x\right\rangle \left\langle
y,u\right\rangle -\left\langle x,y\right\rangle \left\langle
z,u\right\rangle , \\
\left\langle X\left( x,y,u\right) ,z\right\rangle &=&\left\langle x\overline{%
y}u,z\right\rangle -\left\langle y,u\right\rangle \left\langle
x,z\right\rangle +\left\langle u,x\right\rangle \left\langle
y,z\right\rangle -\left\langle x,y\right\rangle \left\langle
u,z\right\rangle ,
\end{eqnarray*}%
which added yields%
\begin{eqnarray*}
\left\langle X\left( x,y,z\right) ,u\right\rangle +\left\langle X\left(
x,y,u\right) ,z\right\rangle &=&\left\langle x\overline{y}z,u\right\rangle
+\left\langle x\overline{y}u,z\right\rangle -2\left\langle x,y\right\rangle
\left\langle z,u\right\rangle \\
&=&\left\langle x\overline{y},u\overline{z}+z\overline{u}\right\rangle
-2\left\langle x,y\right\rangle \left\langle z,u\right\rangle \\
&=&2\left\langle x\overline{y},1\right\rangle \left\langle z,u\right\rangle
-2\left\langle x,y\right\rangle \left\langle z,u\right\rangle \\
&=&2\left\langle x,y\right\rangle \left\langle z,u\right\rangle
-2\left\langle x,y\right\rangle \left\langle z,u\right\rangle \\
&\equiv &0,
\end{eqnarray*}%
using Eqs. (\ref{Eq:IPDef}) and (\ref{Eq:IP(xy,z)}), showing that $%
\left\langle X\left( x,y,z\right) ,u\right\rangle $ is antisymmetric in its
last two arguments, $z$ and $u$. As it is as well completely antisymmetric
in its first three arguments, due to the complete antisymmetry of $X\left(
x,y,z\right) $, as previously noted, it is in fact completely antisymmetric
in all its arguments. Now to the proof of Eq. (\ref{Eq:SVLeviCivita}):

\begin{proof}
From Eqs. (\ref{Eq:XAlternative}) follows, using Eq. (\ref{Eq:MetricDef}),%
\begin{eqnarray*}
X\left( s^{\mu },s^{\nu },s^{\rho }\right) &=&s^{\mu }\overline{s}^{\nu
}s^{\rho }-g^{\nu \rho }s^{\mu }+g^{\rho \mu }s^{\nu }-g^{\mu \nu }s^{\rho },
\\
X\left( s^{\nu },s^{\mu },s^{\rho }\right) &=&s^{\nu }\overline{s}^{\mu
}s^{\rho }-g^{\mu \rho }s^{\nu }+g^{\rho \nu }s^{\mu }-g^{\nu \mu }s^{\rho },
\end{eqnarray*}%
which subtracted, using the antisymmetry of $X$, yields%
\begin{equation*}
2X\left( s^{\mu },s^{\nu },s^{\rho }\right) =\left( s^{\mu }\overline{s}%
^{\nu }-s^{\nu }\overline{s}^{\mu }\right) s^{\rho }+2g^{\rho \mu }s^{\nu
}-2g^{\rho \nu }s^{\mu },
\end{equation*}%
from which it follows that%
\begin{eqnarray*}
2\left\langle X\left( s^{\mu },s^{\nu },s^{\rho }\right) ,s^{\sigma
}\right\rangle &=&\left\langle \left( s^{\mu }\overline{s}^{\nu }-s^{\nu }%
\overline{s}^{\mu }\right) s^{\rho },s^{\sigma }\right\rangle +2g^{\rho \mu
}\left\langle s^{\nu },s^{\sigma }\right\rangle -2g^{\rho \nu }\left\langle
s^{\mu },s^{\sigma }\right\rangle \\
&=&\left\langle \left( s^{\mu }\overline{s}^{\nu }-s^{\nu }\overline{s}^{\mu
}\right) s^{\rho },s^{\sigma }\right\rangle +2\left( g^{\rho \mu }g^{\nu
\sigma }-g^{\rho \nu }g^{\mu \sigma }\right) \\
&=&-\left\langle s^{\rho },\left( s^{\mu }\overline{s}^{\nu }-s^{\nu }%
\overline{s}^{\mu }\right) s^{\sigma }\right\rangle +2\left( g^{\rho \mu
}g^{\nu \sigma }-g^{\rho \nu }g^{\mu \sigma }\right) \\
&=&-4\left( \mathbf{S}^{\mu \nu }\right) ^{\rho \sigma }+2\left( \mathbf{V}%
^{\mu \nu }\right) ^{\rho \sigma },
\end{eqnarray*}%
using Eqs. (\ref{Eq:IP(xy,z)}); or, equivalently:%
\begin{equation*}
2\left( \mathbf{S}^{\mu \nu }\right) ^{\rho \sigma }=\left( \mathbf{V}^{\mu
\nu }\right) ^{\rho \sigma }-\left\langle X\left( s^{\mu },s^{\nu },s^{\rho
}\right) ,s^{\sigma }\right\rangle .
\end{equation*}%
Now, due to the complete antisymmetry of $\left\langle X\left( x,y,z\right)
,u\right\rangle $ in all its arguments, as previously established, the
second addend on the right-hand side is completely antisymmetric in $\mu \nu
\rho \sigma $. As it is by construction also a type $\left( 4,0\right) $
tensor, it must be proportional to $\varepsilon ^{\mu \nu \rho \sigma }$. By
plugging in some simple choices of $s^{\mu }$, it is readily established
that $\left\langle X\left( s^{\mu },s^{\nu },s^{\rho }\right) ,s^{\sigma
}\right\rangle =\pm \mathrm{i}\varepsilon ^{\mu \nu \rho \sigma }$, the
signs being mutually exclusive, of course. From this, the proof then follows.
\end{proof}

It should be noted that the condition $\left\langle X\left( s^{\mu },s^{\nu
},s^{\rho }\right) ,s^{\sigma }\right\rangle =\pm \mathrm{i}\varepsilon
^{\mu \nu \rho \sigma }$, for some specific choice of sign, is completely
independent of the relation $g_{\mu \nu }=\left\langle s_{\mu },s_{\nu
}\right\rangle $, Eq. (\ref{Eq:MetricDef}), posing on $s^{\mu }$ only a 
\textit{discrete} condition of 'handedness'.

\section{\label{Sec:SpinorLagrangian}The spinor Lagrangian}

Before presenting the spinor Lagrangian, first in flat spacetime, and later
in nonflat spacetime, some preliminaries: Consider the following
complex-valued type $\left( 1,1\right) $ tensor field:%
\begin{equation}
N^{\rho }{}_{\sigma }\equiv \left\langle s^{\rho },s_{\sigma }\kappa
\right\rangle ,  \label{Eq:NDef}
\end{equation}%
where $\kappa \in \mathrm{Vec}\left( \mathrm{i}\mathbb{H}\right) $ with $%
\kappa ^{2}=1$ is spacetime-independent. It is antisymmetric, $N_{\rho
\sigma }=-N_{\sigma \rho }$, because%
\begin{eqnarray}
N^{\rho }{}_{\sigma } &=&-\left\langle s^{\rho }\kappa ,s_{\sigma
}\right\rangle  \notag \\
&=&-\left\langle s_{\sigma },s^{\rho }\kappa \right\rangle  \notag \\
&=&-N_{\sigma }{}^{\rho },  \label{Eq:NAntiSym}
\end{eqnarray}%
using $\overline{\kappa }=-\kappa $, and Eqs. (\ref{Eq:IPSym}) and (\ref%
{Eq:IP(x,yz)}).

\begin{proposition}
$N^{\rho }{}_{\sigma }$ satisfies the following algebra (note the single
complex conjugation):%
\begin{eqnarray}
0 &=&M^{\mu \rho }{}_{\tau }\left( N^{\tau }{}_{\sigma }\right) ^{\ast
}+N^{\rho }{}_{\tau }M^{\mu \tau }{}_{\sigma },  \label{Eq:MN} \\
\delta _{\sigma }^{\rho } &=&N^{\rho }{}_{\tau }N^{\tau }{}_{\sigma },
\label{Eq:NN}
\end{eqnarray}%
with $M^{\mu \rho }{}_{\sigma }$ being, of course, the previously defined
tensor field, Eq. (\ref{Eq:MDef}).
\end{proposition}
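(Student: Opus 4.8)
The plan is to argue exactly as in the proofs of Propositions~\ref{Prop:MAlgebra} and~\ref{Prop:SAlgebra}: insert the inner-product definitions~(\ref{Eq:MDef}) and~(\ref{Eq:NDef}), peel the spectator factors $s^{\rho}$, $s^{\mu}$, $\kappa$ out of the way with the identities~(\ref{Eq:IPSym})--(\ref{Eq:IP(x,yz)}) until the summed index stands alone inside each inner product, contract that index with the completeness relation~(\ref{Eq:CompletenessRelation}), and finally simplify the single surviving inner product using $\overline{\kappa}=-\kappa$ (already used in deriving~(\ref{Eq:NAntiSym})) together with $\kappa^{2}=1$. No new idea is needed beyond this bookkeeping.

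One small preliminary observation does all the work for~(\ref{Eq:MN}), namely the behaviour of $N$ under complex conjugation. Since complex conjugation is multiplicative on $\mathbb{C}\otimes\mathbb{H}$ and commutes with the inner product, $\langle x,y\rangle^{\ast}=\langle x^{\ast},y^{\ast}\rangle$, and since $s_{\mu}^{\ast}=\overline{s}_{\mu}$ while $\kappa^{\ast}=-\kappa$ (because $\kappa\in\mathrm{Vec}(\mathrm{i}\mathbb{H})$), one obtains, after one application of~(\ref{Eq:IPConj}) and $\overline{\kappa}=-\kappa$,
\[(N^{\rho}{}_{\sigma})^{\ast}=\langle s^{\rho},\kappa s_{\sigma}\rangle,\]
the two minus signs from $\kappa^{\ast}=-\kappa$ and $\overline{\kappa}=-\kappa$ cancelling; in words, conjugation merely shifts $\kappa$ from the right of $s_{\sigma}$ to its left. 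This plays the same role here that~(\ref{Eq:MComplexConj}) played in the earlier proofs.

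For~(\ref{Eq:NN}): write $N^{\rho}{}_{\tau}N^{\tau}{}_{\sigma}=\langle s^{\rho},s_{\tau}\kappa\rangle\langle s^{\tau},s_{\sigma}\kappa\rangle$, convert the first factor to $\langle s^{\rho}\overline{\kappa},s_{\tau}\rangle$ via~(\ref{Eq:IP(x,yz)}), contract over $\tau$ by~(\ref{Eq:CompletenessRelation}) to reach $\langle s^{\rho}\overline{\kappa},s_{\sigma}\kappa\rangle$, and then apply~(\ref{Eq:IP(xy,z)}) to land on $\langle s^{\rho},s_{\sigma}\kappa\kappa\rangle=\langle s^{\rho},s_{\sigma}\kappa^{2}\rangle=\langle s^{\rho},s_{\sigma}\rangle=\delta^{\rho}_{\sigma}$, using~(\ref{Eq:MetricDef}) at the end. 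For~(\ref{Eq:MN}): using the conjugation property above, $M^{\mu\rho}{}_{\tau}(N^{\tau}{}_{\sigma})^{\ast}+N^{\rho}{}_{\tau}M^{\mu\tau}{}_{\sigma}=\langle s^{\mu},s^{\rho}s_{\tau}\rangle\langle s^{\tau},\kappa s_{\sigma}\rangle+\langle s^{\rho},s_{\tau}\kappa\rangle\langle s^{\mu},s^{\tau}s_{\sigma}\rangle$. In each product I would move the spectator factors out with~(\ref{Eq:IP(xy,z)})--(\ref{Eq:IP(x,yz)}) and~(\ref{Eq:IPSym}) so that $s_{\tau}$ and $s^{\tau}$ appear alone (for the first term $\langle\overline{s}^{\rho}s^{\mu},s_{\tau}\rangle\langle s^{\tau},\kappa s_{\sigma}\rangle$, for the second $\langle s^{\rho}\overline{\kappa},s_{\tau}\rangle\langle s^{\tau},s^{\mu}\overline{s}_{\sigma}\rangle$), contract over $\tau$ by completeness, and then, with a couple more uses of~(\ref{Eq:IPSym}),~(\ref{Eq:IP(xy,z)})--(\ref{Eq:IP(x,yz)}) and $\overline{\kappa}=-\kappa$, reduce both surviving inner products to $\pm\langle s^{\mu},s^{\rho}\kappa s_{\sigma}\rangle$ with opposite signs, so that their sum vanishes.

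The only real pitfall is sign bookkeeping: each of $\overline{\kappa}=-\kappa$ and $\kappa^{\ast}=-\kappa$ inserts a minus sign, and at every use of~(\ref{Eq:IP(xy,z)}) or~(\ref{Eq:IP(x,yz)}) one must choose the correct one of the two equivalent forms so that after contraction the intermediate index remains correctly paired, one upper and one lower, as~(\ref{Eq:CompletenessRelation}) requires. There is no conceptual obstacle; the computation has the same shape as those in Propositions~\ref{Prop:MAlgebra} and~\ref{Prop:SAlgebra}, just with $\kappa$ carried along for the ride.
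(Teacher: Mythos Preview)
Your proposal is correct and follows essentially the same route as the paper: insert the definitions, use the inner-product identities (\ref{Eq:IPSym})--(\ref{Eq:IP(x,yz)}) to isolate $s_{\tau}$ and $s^{\tau}$, contract via completeness, and simplify with $\overline{\kappa}=-\kappa$ and $\kappa^{2}=1$. The only cosmetic difference is that you pre-compute $(N^{\tau}{}_{\sigma})^{\ast}=\langle s^{\tau},\kappa s_{\sigma}\rangle$ before substituting, whereas the paper carries the conjugated form $-\langle\overline{s}^{\tau},\overline{s}_{\sigma}\kappa\rangle$ through the first step; the subsequent manipulations and the final cancellation are identical in content.
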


\begin{proof}
By direct calculation:%
\begin{eqnarray*}
M^{\mu \rho }{}_{\tau }\left( N^{\tau }{}_{\sigma }\right) ^{\ast }+N^{\rho
}{}_{\tau }M^{\mu \tau }{}_{\sigma } &=&-\left\langle s^{\mu },s^{\rho
}s_{\tau }\right\rangle \left\langle \overline{s}^{\tau },\overline{s}%
_{\sigma }\kappa \right\rangle +\left\langle s^{\rho },s_{\tau }\kappa
\right\rangle \left\langle s^{\mu },s^{\tau }s_{\sigma }\right\rangle \\
&=&\left\langle \overline{s}^{\rho }s^{\mu },s_{\tau }\right\rangle
\left\langle s^{\tau },\kappa s_{\sigma }\right\rangle -\left\langle s^{\rho
}\kappa ,s_{\tau }\right\rangle \left\langle s^{\tau },s^{\mu }\overline{s}%
_{\sigma }\right\rangle \\
&=&\left\langle \overline{s}^{\rho }s^{\mu },\kappa s_{\sigma }\right\rangle
-\left\langle s^{\rho }\kappa ,s^{\mu }\overline{s}_{\sigma }\right\rangle \\
&=&\left\langle \overline{s}^{\rho }s^{\mu },\kappa s_{\sigma }\right\rangle
+\left\langle \overline{s}^{\mu }s^{\rho },\overline{s}_{\sigma }\kappa
\right\rangle \\
&=&\left\langle \overline{s}^{\rho }s^{\mu },\kappa s_{\sigma }\right\rangle
-\left\langle \overline{s}^{\rho }s^{\mu },\kappa s_{\sigma }\right\rangle \\
&\equiv &0,
\end{eqnarray*}%
and%
\begin{eqnarray*}
N^{\rho }{}_{\tau }N^{\tau }{}_{\sigma } &=&\left\langle s^{\rho },s_{\tau
}\kappa \right\rangle \left\langle s^{\tau },s_{\sigma }\kappa \right\rangle
\\
&=&-\left\langle s^{\rho }\kappa ,s_{\tau }\right\rangle \left\langle
s^{\tau },s_{\sigma }\kappa \right\rangle \\
&=&-\left\langle s^{\rho }\kappa ,s_{\sigma }\kappa \right\rangle \\
&=&\left\langle s^{\rho },s_{\sigma }\kappa ^{2}\right\rangle \\
&=&\left\langle s^{\rho },s_{\sigma }\right\rangle \\
&=&\delta _{\sigma }^{\rho };
\end{eqnarray*}%
using $\overline{\kappa }=-\kappa $ and $\kappa ^{\ast }=-\kappa $, and
several of the properties of the inner product listed in Sec. \ref%
{Sec:Preliminaries}.
\end{proof}

In terms of the $4\times 4$ matrix $\mathbf{N}$ with components $\mathbf{N}%
^{\rho }{}_{\sigma }\equiv N^{\rho }{}_{\sigma }$, and the previously
defined matrices $\mathbf{M}^{\mu }$, these relations may also be written
concisely in matrix notation as%
\begin{eqnarray}
\mathbf{0} &=&\mathbf{M}^{\mu }\mathbf{N}^{\ast }+\mathbf{NM}^{\mu },
\label{Eq:MN_Matrix} \\
\mathbf{1} &=&\mathbf{N}^{2},  \label{Eq:NN_Matrix}
\end{eqnarray}%
where $\mathbf{0}$ is the zero matrix.

\begin{proposition}
$\mathbf{S}^{\mu \nu }$ and $\mathbf{N}$ commute:%
\begin{equation}
\mathbf{0}=\left[ \mathbf{S}^{\mu \nu },\mathbf{N}\right] .
\label{Eq:SN_Matrix}
\end{equation}
\end{proposition}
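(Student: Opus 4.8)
The plan is to argue entirely at the level of the $4\times4$ matrices, reducing everything to Eq.~(\ref{Eq:SMM_Matrix}) for $\mathbf{S}^{\mu\nu}$ together with Eq.~(\ref{Eq:MN_Matrix}) and its complex conjugate. Since complex conjugation acts entrywise on these matrices, so that $(\mathbf{A}\mathbf{B})^{\ast}=\mathbf{A}^{\ast}\mathbf{B}^{\ast}$, conjugating Eq.~(\ref{Eq:MN_Matrix}) yields $\mathbf{0}=\mathbf{M}^{\mu\ast}\mathbf{N}+\mathbf{N}^{\ast}\mathbf{M}^{\mu\ast}$. I thus have the two ``move-past'' rules $\mathbf{N}\mathbf{M}^{\mu}=-\mathbf{M}^{\mu}\mathbf{N}^{\ast}$ and $\mathbf{N}^{\ast}\mathbf{M}^{\mu\ast}=-\mathbf{M}^{\mu\ast}\mathbf{N}$: dragging $\mathbf{N}$ to the right across one factor of $\mathbf{M}$ costs a minus sign and conjugates $\mathbf{N}$, while dragging $\mathbf{N}^{\ast}$ across one factor of $\mathbf{M}^{\ast}$ costs another minus sign and un-conjugates it.

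With these rules in hand I would simply compute $4\mathbf{N}\mathbf{S}^{\mu\nu}=\mathbf{N}(\mathbf{M}^{\mu}\mathbf{M}^{\nu\ast}-\mathbf{M}^{\nu}\mathbf{M}^{\mu\ast})$ by pushing the leading $\mathbf{N}$ all the way to the right. In each of the two terms this requires two passes, one across each $\mathbf{M}$-type factor, so the two accumulated signs cancel and $\mathbf{N}$ re-emerges on the far right, again unconjugated: $4\mathbf{N}\mathbf{S}^{\mu\nu}=(\mathbf{M}^{\mu}\mathbf{M}^{\nu\ast}-\mathbf{M}^{\nu}\mathbf{M}^{\mu\ast})\mathbf{N}=4\mathbf{S}^{\mu\nu}\mathbf{N}$, which is precisely Eq.~(\ref{Eq:SN_Matrix}).

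I do not expect any real obstacle; the only thing to watch is the bookkeeping of the complex conjugations. In particular one cannot get away with Eq.~(\ref{Eq:MN_Matrix}) by itself, because a single pass of $\mathbf{N}$ across an $\mathbf{M}$ already turns it into $\mathbf{N}^{\ast}$, so the conjugated relation is genuinely needed for the second pass; it is the interplay of the two that makes the net effect a clean right-translation of $\mathbf{N}$. The same computation could instead be carried out in index notation from Eqs.~(\ref{Eq:SMM}) and~(\ref{Eq:MN}), but the matrix route is shorter. One may also note, as a consistency check that is not part of the proof, that this commutativity is the present-formalism analogue of the fact that in a local inertial frame $\mathbf{N}$ commutes with the spinor Lorentz generators $\mathbf{S}^{\mu\nu}$.
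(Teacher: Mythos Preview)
Your argument is correct and complete: the two ``move-past'' rules you extract from Eq.~(\ref{Eq:MN_Matrix}) and its complex conjugate are exactly what is needed, and pushing $\mathbf{N}$ through the two-factor expression $\mathbf{M}^{\mu}\mathbf{M}^{\nu\ast}$ indeed returns it unconjugated with the two signs cancelling.

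This is, however, a genuinely different route from the paper's. The paper does not invoke Eq.~(\ref{Eq:SMM_Matrix}) or Eq.~(\ref{Eq:MN_Matrix}) at all; instead it goes back to the primitive definitions of $S^{\mu\nu\rho}{}_{\sigma}$ and $N^{\rho}{}_{\sigma}$ in terms of $s_{\mu}$ and $\kappa$, and separately reduces both $4(\mathbf{S}^{\mu\nu})^{\rho}{}_{\tau}\mathbf{N}^{\tau}{}_{\sigma}$ and $4\mathbf{N}^{\rho}{}_{\tau}(\mathbf{S}^{\mu\nu})^{\tau}{}_{\sigma}$, via the inner-product identities (\ref{Eq:IPSym})--(\ref{Eq:IP(x,yz)}) and the completeness relation, to the common expression $-\langle\overline{s}_{\sigma}(s^{\mu}\overline{s}^{\nu}-s^{\nu}\overline{s}^{\mu})s^{\rho},\kappa\rangle$. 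Your approach is shorter and more structural, since it recycles the already-established Propositions rather than replaying the quaternionic manipulations; the paper's approach has the minor advantage of being self-contained from the definitions and of exhibiting an explicit closed form for the common value of $\mathbf{S}^{\mu\nu}\mathbf{N}$ and $\mathbf{N}\mathbf{S}^{\mu\nu}$.
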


\begin{proof}
The assertion follows immediately from the following two expressions:%
\begin{eqnarray*}
4\left( \mathbf{S}^{\mu \nu }\right) ^{\rho }{}_{\tau }\mathbf{N}^{\tau
}{}_{\sigma } &\equiv &\left\langle s^{\rho },\left( s^{\mu }\overline{s}%
^{\nu }-s^{\nu }\overline{s}^{\mu }\right) s_{\tau }\right\rangle
\left\langle s^{\tau },s_{\sigma }\kappa \right\rangle \\
&=&-\left\langle \left( s^{\mu }\overline{s}^{\nu }-s^{\nu }\overline{s}%
^{\mu }\right) s^{\rho },s_{\tau }\right\rangle \left\langle s^{\tau
},s_{\sigma }\kappa \right\rangle \\
&=&-\left\langle \left( s^{\mu }\overline{s}^{\nu }-s^{\nu }\overline{s}%
^{\mu }\right) s^{\rho },s_{\sigma }\kappa \right\rangle \\
&=&-\left\langle \overline{s}_{\sigma }\left( s^{\mu }\overline{s}^{\nu
}-s^{\nu }\overline{s}^{\mu }\right) s^{\rho },\kappa \right\rangle ,
\end{eqnarray*}%
and%
\begin{eqnarray*}
4\mathbf{N}^{\rho }{}_{\tau }\left( \mathbf{S}^{\mu \nu }\right) ^{\tau
}{}_{\sigma } &\equiv &\left\langle s^{\rho },s_{\tau }\kappa \right\rangle
\left\langle s^{\tau },\left( s^{\mu }\overline{s}^{\nu }-s^{\nu }\overline{s%
}^{\mu }\right) s_{\sigma }\right\rangle \\
&=&-\left\langle s^{\rho }\kappa ,s_{\tau }\right\rangle \left\langle
s^{\tau },\left( s^{\mu }\overline{s}^{\nu }-s^{\nu }\overline{s}^{\mu
}\right) s_{\sigma }\right\rangle \\
&=&-\left\langle s^{\rho }\kappa ,\left( s^{\mu }\overline{s}^{\nu }-s^{\nu }%
\overline{s}^{\mu }\right) s_{\sigma }\right\rangle \\
&=&-\left\langle \overline{s}^{\rho }\left( s^{\mu }\overline{s}^{\nu
}-s^{\nu }\overline{s}^{\mu }\right) s_{\sigma },\kappa \right\rangle \\
&=&-\left\langle \overline{s}_{\sigma }\left( s^{\mu }\overline{s}^{\nu
}-s^{\nu }\overline{s}^{\mu }\right) s^{\rho },\kappa \right\rangle ,
\end{eqnarray*}%
using several of the properties of the inner product listed in Sec. \ref%
{Sec:Preliminaries}, the last equality, in particular, following from Eq. (%
\ref{Eq:IPConj}) and $\overline{\kappa }=-\kappa $.
\end{proof}

\subsection{Flat spacetime}

Consider globally flat spacetime $\left\langle s_{\mu },s_{\nu
}\right\rangle =\eta _{\mu \nu }$ in Cartesian coordinates, assuming $s_{\mu
}$ to be spacetime-independent. Consider at the classical level the
following spinor Lagrangian:%
\begin{equation}
\mathcal{L}=\frac{\mathrm{i}}{2}\psi _{\rho }^{\ast }M^{\mu \rho }{}_{\sigma
}\partial _{\mu }\psi ^{\sigma }-\frac{\mathrm{i}}{2}\left( \partial _{\mu
}\psi _{\rho }\right) ^{\ast }M^{\mu \rho }{}_{\sigma }\psi ^{\sigma }+\frac{%
m}{2}\left[ \psi _{\rho }^{\ast }N^{\rho }{}_{\sigma }\psi ^{\sigma \ast
}-\psi _{\rho }\left( N^{\rho }{}_{\sigma }\right) ^{\ast }\psi ^{\sigma }%
\right] ,  \label{Eq:FlatSpacetimeL}
\end{equation}%
where, as advertised in the Introduction, the complex Grassmann-valued
spinor field $\psi $ carries a world index, rather than a standard spinor
index. Beware not to confuse $\psi ^{\rho }$, the components of that field,
with a Rarita-Schwinger/gravitino field \cite[Sec. 31.3]{Weinberg}. The
Majorana-like mass term is properly nontrivial due to Eq. (\ref{Eq:NAntiSym}%
), and it is complex self-conjugate (hermitian), by construction. The
kinetic term is complex self-conjugate (hermitian) due to Eq. (\ref%
{Eq:MComplexConj}). A Lorentz invariant Lagrangian with a Dirac-like mass
term does not seem possible. The corresponding Euler-Lagrange equations of
motion are given by%
\begin{equation}
0=\left( \mathrm{i}M^{\mu \rho }{}_{\sigma }\partial _{\mu }+mN^{\rho
}{}_{\sigma }K\right) \psi ^{\sigma },  \label{Eq:FlatSpacetimeEOM}
\end{equation}%
where $K$ is the operator of complex conjugation; or, equivalently, in
matrix/vector notation:%
\begin{eqnarray}
0 &=&\left( \mathrm{i}\mathbf{M}^{\mu }\partial _{\mu }+m\mathbf{N}K\right) 
\mathbf{\psi }  \notag \\
&\equiv &\left( \mathrm{i}\mathbf{M}^{\mu }K\partial _{\mu }+m\mathbf{N}%
\right) K\mathbf{\psi },  \label{Eq:FlatSpacetimeEOM_Matrix}
\end{eqnarray}%
where $\mathbf{\psi }$ is a four-column (vector) with components $\mathbf{%
\psi }^{\mu }=\psi ^{\mu }$.

A word of warning: Beware not to mistake $\psi _{\mu }^{\ast }$ (as in the
above Lagrangian, for instance) for being the components of the four-row
(vector) $\mathbf{\psi }^{\dagger }$, for they are not due to $\left( 
\mathbf{\psi }^{\dagger }\right) _{\mu }\equiv \left( \mathbf{\psi }\right)
^{\mu \ast }=\psi ^{\mu \ast }\neq \psi _{\mu }^{\ast }$, the inequality
being the result of the metric being indefinite; this can be contrasted with
the standard Dirac formalism where the analogous relation would read $\left( 
\mathbf{\psi }^{\dagger }\right) _{a}\equiv \left( \mathbf{\psi }\right)
^{a\ast }=\psi ^{a\ast }=\psi _{a}^{\ast }$ (now equality), because there is
no difference between having upper or lower \textit{spinor} indices.
Therefore, $\psi _{\rho }^{\ast }M^{\mu \rho }{}_{\sigma }\partial _{\mu
}\psi ^{\sigma }\neq \mathbf{\psi }^{\dagger }\mathbf{M}^{\mu }\partial
_{\mu }\mathbf{\psi }$ (inequality), for instance, which is the reason why
the above Lagrangian is given in tensor notation rather than in
matrix/vector notation. Of course, some operator, $\ddagger $ say, could be
defined so that $\psi _{\rho }^{\ast }M^{\mu \rho }{}_{\sigma }\partial
_{\mu }\psi ^{\sigma }=\mathbf{\psi }^{\ddagger }\mathbf{M}^{\mu }\partial
_{\mu }\mathbf{\psi }$ (equality), etc., but it does not seem to be a very
fruitful strategy.

Now, Eq. (\ref{Eq:FlatSpacetimeEOM_Matrix}) is readily seen to be
'Klein-Gordon compatible' in the sense that a plane wave solution $\mathbf{%
\psi }=\mathbf{\psi }_{0}\exp \left( -\mathrm{i}p\cdot x\right) $ can be
(but need not be) a solution to it only if it is on mass shell, $p^{2}=m^{2}$%
:%
\begin{eqnarray*}
0 &=&\left( \mathrm{i}\mathbf{M}^{\mu }K\partial _{\mu }+m\mathbf{N}\right)
\left( \mathrm{i}\mathbf{M}^{\nu }K\partial _{\nu }+m\mathbf{N}\right) \\
&=&\mathbf{M}^{\mu }\mathbf{M}^{\nu \ast }\partial _{\mu }\partial _{\nu }+%
\mathrm{i}m\left( \mathbf{M}^{\mu }\mathbf{N}^{\ast }+\mathbf{NM}^{\mu
}\right) K\partial _{\mu }+m^{2}\mathbf{N}^{2} \\
&=&g^{\mu \nu }\partial _{\mu }\partial _{\nu }+m^{2}\mathbf{1},
\end{eqnarray*}%
using Eq. (\ref{Eq:MMStar_Matrix}) and Eqs. (\ref{Eq:MN_Matrix})-(\ref%
{Eq:NN_Matrix}). Note that the assumed spacetime-independency of $s_{\mu }$
is used to freely move derivatives through $\mathbf{M}^{\mu }$ and $\mathbf{N%
}$.

But Klein-Gordon compatibility of the equations of motion is of course not
near sufficient to have a sensible theory. The Lagrangian must also at least
be globally Lorentz invariant, a subject to which is now turned: Assume that
under a global infinitesimal 'Lorentz' transformation, $\psi ^{\rho }$ and $%
s^{\rho }$ transform as%
\begin{eqnarray}
\delta \psi ^{\rho } &=&-\frac{1}{2}\left( d\theta _{\alpha \beta }\right)
\left( \mathbf{S}^{\alpha \beta }\right) ^{\rho }{}_{\sigma }\psi ^{\sigma },
\label{Eq:psiULorentzTrans} \\
\delta s^{\rho } &=&-\frac{1}{2}\left( d\theta _{\alpha \beta }\right)
\left( \mathbf{V}^{\alpha \beta }\right) ^{\rho }{}_{\sigma }s^{\sigma
}\equiv -\left( d\theta ^{\rho }{}_{\sigma }\right) s^{\sigma },
\label{Eq:sULorentzTrans}
\end{eqnarray}%
with $\left( \mathbf{S}^{\alpha \beta }\right) ^{\rho }{}_{\sigma }$ and $%
\left( \mathbf{V}^{\alpha \beta }\right) ^{\rho }{}_{\sigma }$ as previously
defined. They are equivalent to%
\begin{eqnarray}
\delta \psi _{\rho } &=&\frac{1}{2}\left( d\theta _{\alpha \beta }\right)
\left( \mathbf{S}^{\alpha \beta }\right) ^{\sigma }{}_{\rho }\psi _{\sigma },
\label{Eq:psiLLorentzTrans} \\
\delta s_{\rho } &=&\frac{1}{2}\left( d\theta _{\alpha \beta }\right) \left( 
\mathbf{V}^{\alpha \beta }\right) ^{\sigma }{}_{\rho }s_{\sigma }\equiv
\left( d\theta ^{\sigma }{}_{\rho }\right) s_{\sigma },
\label{Eq:sLLorentzTrans}
\end{eqnarray}%
due to Eqs. (\ref{Eq:gSg})-(\ref{Eq:gVg}). The infinitesimal parameters $%
d\theta _{\alpha \beta }=-d\theta _{\beta \alpha }\in \mathbb{R}$ are
assumed to be spacetime-independent, of course, as befits a global
transformation (in flat spacetime in Cartesian coordinates). The overall
sign of $d\theta _{\alpha \beta }$ has been chosen with foresight to have
the standard vierbein to be introduced as an auxiliary/calculational device
in Sec. \ref{Sec:ContactWithGR} transform standardly under Lorentz
transformations, compare Eq. (\ref{Eq:deltaVierbeinStd}). Due to the
defining Eqs. (\ref{Eq:MDef}) and (\ref{Eq:NDef}), the variations $\delta
s^{\rho }$ and $\delta s_{\rho }$ induce the following variations:%
\begin{eqnarray}
\delta M^{\mu \rho }{}_{\sigma } &=&-\frac{1}{2}\left( d\theta _{\alpha
\beta }\right) \left[ \left( \mathbf{V}^{\alpha \beta }\right) ^{\mu
}{}_{\tau }M^{\tau \rho }{}_{\sigma }+\left( \mathbf{V}^{\alpha \beta
}\right) ^{\rho }{}_{\tau }M^{\mu \tau }{}_{\sigma }-\left( \mathbf{V}%
^{\alpha \beta }\right) ^{\tau }{}_{\sigma }M^{\mu \rho }{}_{\tau }\right] ,
\label{Eq:MTrans} \\
\delta N^{\rho }{}_{\sigma } &=&-\frac{1}{2}\left( d\theta _{\alpha \beta
}\right) \left[ \left( \mathbf{V}^{\alpha \beta }\right) ^{\rho }{}_{\tau
}N^{\tau }{}_{\sigma }-\left( \mathbf{V}^{\alpha \beta }\right) ^{\tau
}{}_{\sigma }N^{\rho }{}_{\tau }\right] .  \label{Eq:NTrans}
\end{eqnarray}%
Substituting Eqs. (\ref{Eq:psiULorentzTrans}), (\ref{Eq:psiLLorentzTrans})
and (\ref{Eq:MTrans})-(\ref{Eq:NTrans}) into $\delta \mathcal{L}$, it is
readily seen that $\delta \mathcal{L}=0$ if and only if the following
conditions are satisfied:%
\begin{eqnarray*}
0 &=&\left[ \left( \mathbf{S}^{\alpha \beta }\right) ^{\rho }{}_{\tau
}-\left( \mathbf{V}^{\alpha \beta }\right) ^{\rho }{}_{\tau }\right] ^{\ast
}N^{\tau }{}_{\sigma }-\left[ \left( \mathbf{S}^{\alpha \beta }\right)
^{\tau }{}_{\sigma }-\left( \mathbf{V}^{\alpha \beta }\right) ^{\tau
}{}_{\sigma }\right] ^{\ast }N^{\rho }{}_{\tau }, \\
0 &=&\left[ \left( \mathbf{S}^{\alpha \beta }\right) ^{\rho }{}_{\tau
}-\left( \mathbf{V}^{\alpha \beta }\right) ^{\rho }{}_{\tau }\right] ^{\ast
}M^{\mu \tau }{}_{\sigma }-\left[ \left( \mathbf{S}^{\alpha \beta }\right)
^{\tau }{}_{\sigma }-\left( \mathbf{V}^{\alpha \beta }\right) ^{\tau
}{}_{\sigma }\right] M^{\mu \rho }{}_{\tau }-\left( \mathbf{V}^{\alpha \beta
}\right) ^{\mu }{}_{\tau }M^{\tau \rho }{}_{\sigma };
\end{eqnarray*}%
or, equivalently:%
\begin{eqnarray}
0 &=&\left( \mathbf{S}^{\alpha \beta }\right) ^{\rho }{}_{\tau }N^{\tau
}{}_{\sigma }-\left( \mathbf{S}^{\alpha \beta }\right) ^{\tau }{}_{\sigma
}N^{\rho }{}_{\tau },  \label{Eq:LorentzCondN} \\
0 &=&\left( \mathbf{S}^{\alpha \beta }\right) ^{\rho }{}_{\tau }M^{\mu \tau
}{}_{\sigma }-\left[ \left( \mathbf{S}^{\alpha \beta }\right) ^{\tau
}{}_{\sigma }\right] ^{\ast }M^{\mu \rho }{}_{\tau }+\left( \mathbf{V}%
^{\alpha \beta }\right) ^{\mu }{}_{\tau }M^{\tau \rho }{}_{\sigma },
\label{Eq:LorentzCondM}
\end{eqnarray}%
using Eq. (\ref{Eq:SVLeviCivita}) and its complex conjugate, as well as the
real-valuedness of $\mathbf{V}^{\alpha \beta }$; or, equivalently:%
\begin{eqnarray}
\mathbf{0} &=&\mathbf{S}^{\alpha \beta }\mathbf{N}-\mathbf{NS}^{\alpha \beta
}\equiv \left[ \mathbf{S}^{\alpha \beta },\mathbf{N}\right] ,
\label{Eq:LorentzCondN_Matrix} \\
\mathbf{0} &=&\mathbf{S}^{\alpha \beta }\mathbf{M}^{\mu }-\mathbf{M}^{\mu }%
\mathbf{S}^{\alpha \beta \ast }+\left( \mathbf{V}^{\alpha \beta }\right)
^{\mu }{}_{\tau }\mathbf{M}^{\tau }  \notag \\
&\equiv &\mathbf{S}^{\alpha \beta }\mathbf{M}^{\mu }-\mathbf{M}^{\mu }%
\mathbf{S}^{\alpha \beta \ast }+\left( g^{\alpha \mu }\delta _{\tau }^{\beta
}-g^{\beta \mu }\delta _{\tau }^{\alpha }\right) \mathbf{M}^{\tau }  \notag
\\
&=&\mathbf{S}^{\alpha \beta }\mathbf{M}^{\mu }-\mathbf{M}^{\mu }\mathbf{S}%
^{\alpha \beta \ast }+g^{\mu \alpha }\mathbf{M}^{\beta }-g^{\mu \beta }%
\mathbf{M}^{\alpha }.  \label{Eq:LorentzCondM_Matrix}
\end{eqnarray}%
These conditions are indeed satisfied due to Eqs. (\ref{Eq:DoubleCover}) and
(\ref{Eq:SN_Matrix}), thus proving that $\mathcal{L}$ is globally 'Lorentz'
invariant (the use of quotation marks here and previously is as a reminder
that the generators obey the 'modified Lorentz algebra', Eq. (\ref%
{Eq:ModLorentzAlgebra})).

\subsection{Nonflat spacetime}

In going to nonflat spacetime (or employing curvilinear coordinates in flat
spacetime, for that matter), the previously given Lagrangian, Eq. (\ref%
{Eq:FlatSpacetimeL}), must be generalized to%
\begin{equation}
\mathcal{L}=\frac{\mathrm{i}}{2}\psi _{\rho }^{\ast }M^{\mu \rho }{}_{\sigma
}\nabla _{\mu }\psi ^{\sigma }-\frac{\mathrm{i}}{2}\left( \nabla _{\mu }\psi
_{\rho }\right) ^{\ast }M^{\mu \rho }{}_{\sigma }\psi ^{\sigma }+\frac{m}{2}%
\left[ \psi _{\rho }^{\ast }N^{\rho }{}_{\sigma }\psi ^{\sigma \ast }-\psi
_{\rho }\left( N^{\rho }{}_{\sigma }\right) ^{\ast }\psi ^{\sigma }\right] .
\label{Eq:NonFlatSpacetimeL}
\end{equation}%
In order to retain the previously derived 'Klein-Gordon compatibility' of
the equations of motion (for the case of flat spacetime in Cartesian
coordinates), it is necessary that $\nabla _{\nu }M^{\mu \rho }{}_{\sigma
}=0 $ and $\nabla _{\nu }N^{\rho }{}_{\sigma }=0$, i.e., that $M^{\mu \rho
}{}_{\sigma }$ and $N^{\rho }{}_{\sigma }$ are covariantly constant. In view
of Eqs. (\ref{Eq:MDef}) and (\ref{Eq:NDef}), this will certainly be the case
if $s_{\mu }$ itself is identically covariantly constant:%
\begin{equation}
0\equiv \nabla _{\nu }s_{\mu }\equiv \partial _{\nu }s_{\mu }-\Gamma ^{\rho
}{}_{\mu \nu }s_{\rho },  \label{Eq:sCovConst}
\end{equation}%
a condition that is uniquely satisfied by%
\begin{equation}
\Gamma ^{\rho }{}_{\mu \nu }=\left\langle s^{\rho },\partial _{\nu }s_{\mu
}\right\rangle ,  \label{Eq:CartanConn}
\end{equation}%
the uniqueness being due to $s_{\mu }$ being a basis of $\mathbb{C}\otimes 
\mathbb{H}$, i.e., $\mathbb{C}\otimes \mathbb{H}=\mathrm{Span}_{\mathbb{C}%
}\left( s_{\mu }\right) $, as noted in Sec. \ref{Sec:Preliminaries}. These
connection coefficients are real-valued (as they should be in order to be
sensible in the realm of a Riemann-Cartan spacetime):%
\begin{eqnarray*}
\left( \Gamma ^{\rho }{}_{\mu \nu }\right) ^{\ast } &=&\left\langle 
\overline{s}^{\rho },\partial _{\nu }\overline{s}_{\mu }\right\rangle \\
&=&\left\langle s^{\rho },\partial _{\nu }s_{\mu }\right\rangle \\
&=&\Gamma ^{\rho }{}_{\mu \nu },
\end{eqnarray*}%
using $s_{\mu }^{\ast }=\overline{s}_{\mu }$ and Eq. (\ref{Eq:IPConj}). The
covariant derivative of the components of the spinor field are then
obviously taken to be%
\begin{equation}
\nabla _{\nu }\psi ^{\rho }\equiv \partial _{\nu }\psi ^{\rho }+\Gamma
^{\rho }{}_{\mu \nu }\psi ^{\mu },  \label{Eq:psiCovDer}
\end{equation}%
as alluded to in the Introduction. Note that $\nabla _{\nu }s_{\mu }\equiv 0$
does not only imply covariant constancy of $M^{\mu \rho }{}_{\sigma }$ and $%
N^{\rho }{}_{\sigma }$, but in view of Eqs. (\ref{Eq:SDef}) and (\ref%
{Eq:VDef}) also, quite satisfactorily, covariant constancy of $S^{\alpha
\beta \rho }{}_{\sigma }$ and $V^{\alpha \beta \rho }{}_{\sigma }$ (and thus
as well of the corresponding 'Lorentz' generators):%
\begin{eqnarray}
0 &\equiv &\nabla _{\nu }S^{\alpha \beta \rho }{}_{\sigma },
\label{Eq:SCovConst} \\
0 &\equiv &\nabla _{\nu }V^{\alpha \beta \rho }{}_{\sigma },
\label{Eq:VCovConst}
\end{eqnarray}%
Concerning global 'Lorentz' invariance: Assume that the now generically
spacetime-dependent parameters $d\theta _{\alpha \beta }$, figuring in Eqs. (%
\ref{Eq:psiULorentzTrans})-(\ref{Eq:sULorentzTrans}), are covariantly
constant:%
\begin{equation}
0=\nabla _{\nu }d\theta _{\alpha \beta }.  \label{Eq:dthetaCovConst}
\end{equation}%
(This condition does of course not alter any of the previous derivations for
flat spacetime in Cartesian coordinates concerning invariance under global
'Lorentz' transformations.) It implies that the connection coefficients, Eq.
(\ref{Eq:CartanConn}), are globally 'Lorentz' invariant:%
\begin{eqnarray*}
\delta \Gamma ^{\rho }{}_{\mu \nu } &\equiv &\left\langle \delta s^{\rho
},\partial _{\nu }s_{\mu }\right\rangle +\left\langle s^{\rho },\partial
_{\nu }\delta s_{\mu }\right\rangle \\
&=&-\left( d\theta ^{\rho }{}_{\sigma }\right) \left\langle s^{\sigma
},\partial _{\nu }s_{\mu }\right\rangle +\left\langle s^{\rho },\partial
_{\nu }\left[ \left( d\theta ^{\sigma }{}_{\mu }\right) s_{\sigma }\right]
\right\rangle \\
&=&-\left( d\theta ^{\rho }{}_{\sigma }\right) \left\langle s^{\sigma
},\partial _{\nu }s_{\mu }\right\rangle +\left( d\theta ^{\sigma }{}_{\mu
}\right) \left\langle s^{\rho },\partial _{\nu }s_{\sigma }\right\rangle
+\left( \partial _{\nu }d\theta ^{\sigma }{}_{\mu }\right) \left\langle
s^{\rho },s_{\sigma }\right\rangle \\
&=&\partial _{\nu }d\theta ^{\rho }{}_{\mu }+\Gamma ^{\rho }{}_{\sigma \nu
}d\theta ^{\sigma }{}_{\mu }-\Gamma ^{\sigma }{}_{\mu \nu }d\theta ^{\rho
}{}_{\sigma } \\
&\equiv &\nabla _{\nu }d\theta ^{\rho }{}_{\mu }.
\end{eqnarray*}%
This in turn implies that $\delta \nabla _{\mu }=0$, i.e., that $\nabla
_{\mu }$ is globally 'Lorentz' invariant. Therefore,%
\begin{eqnarray*}
\delta \left( \nabla _{\mu }\psi ^{\rho }\right) &=&\nabla _{\mu }\delta
\psi ^{\rho } \\
&=&-\frac{1}{2}\nabla _{\mu }\left[ \left( d\theta _{\alpha \beta }\right)
\left( \mathbf{S}^{\alpha \beta }\right) ^{\rho }{}_{\sigma }\psi ^{\sigma }%
\right] \\
&=&-\frac{1}{2}\left( d\theta _{\alpha \beta }\right) \left( \mathbf{S}%
^{\alpha \beta }\right) ^{\rho }{}_{\sigma }\nabla _{\mu }\psi ^{\sigma },
\end{eqnarray*}%
using Eqs. (\ref{Eq:SCovConst}) and (\ref{Eq:dthetaCovConst}), i.e., $\nabla
_{\mu }\psi ^{\rho }$ transforms under global 'Lorentz' transformations as $%
\psi ^{\rho }$ itself does. From this, it readily follows that the above
Lagrangian is itself globally 'Lorentz' invariant. In conjunction with
global 'Lorentz' invariance of the metric (note that this holds for
arbitrary $d\theta _{\alpha \beta }=-d\theta _{\beta \alpha }$):%
\begin{eqnarray*}
\delta \left\langle s_{\mu },s_{\nu }\right\rangle &\equiv &\left\langle
\delta s_{\mu },s_{\nu }\right\rangle +\left\langle s_{\mu },\delta s_{\nu
}\right\rangle \\
&=&\left( d\theta ^{\rho }{}_{\mu }\right) \left\langle s_{\rho },s_{\nu
}\right\rangle +\left( d\theta ^{\rho }{}_{\nu }\right) \left\langle s_{\mu
},s_{\rho }\right\rangle \\
&=&d\theta _{\nu \mu }+d\theta _{\mu \nu } \\
&\equiv &0,
\end{eqnarray*}%
this implies that the action $S=\int \mathcal{L}\sqrt{-g}d^{4}x$ is globally
'Lorentz' invariant. By construction it is, of course, also coordinate
invariant. Note that the only conditions posed are the ones of covariant
constancy of $s_{\mu }$ and $d\theta _{\alpha \beta }$, Eqs. (\ref%
{Eq:sCovConst}) and (\ref{Eq:dthetaCovConst}).

\section{\label{Sec:ContactWithGR}Making contact with general relativity}

Let $s^{a}\in \left( \mathbb{C}\otimes \mathbb{H}\right) ^{+}$ be a
spacetime-independent basis for $\mathbb{C}\otimes \mathbb{H}$ for which $%
\left\langle s^{a},s^{b}\right\rangle =\eta ^{ab}$, and consider the
expansion $s^{\mu }=e^{\mu }{}_{a}s^{a}$, where $e^{\mu }{}_{a}=\left\langle
s^{\mu },s_{a}\right\rangle \in \mathbb{R}$. Inserting this expansion into
the expression for $\delta s^{\rho }$, Eq. (\ref{Eq:sULorentzTrans}),
implies that%
\begin{eqnarray}
\delta e^{\rho }{}_{c} &=&-\frac{1}{2}\left( d\theta _{\alpha \beta }\right)
\left( \mathbf{V}^{\alpha \beta }\right) ^{\rho }{}_{\sigma }e^{\sigma
}{}_{c}  \notag \\
&=&-\frac{1}{2}\left( d\theta _{ab}\right) \left( \mathbf{V}^{ab}\right)
^{d}{}_{c}e^{\rho }{}_{d},  \label{Eq:deltaVierbeinStd}
\end{eqnarray}%
using the identity $\left( \mathbf{V}^{\alpha \beta }\right) ^{\rho
}{}_{\sigma }=e^{\alpha }{}_{a}e^{\beta }{}_{b}e^{\rho
}{}_{r}e^{s}{}_{\sigma }\left( \mathbf{V}^{ab}\right) ^{r}{}_{s}$ with $%
\left( \mathbf{V}^{ab}\right) ^{c}{}_{d}\equiv \eta ^{ac}\delta
_{d}^{b}-\eta ^{bc}\delta _{d}^{a}$ the standard vector representation of
the Lorentz algebra, and introducing $d\theta _{ab}\equiv e^{\alpha
}{}_{a}e^{\beta }{}_{b}d\theta _{\alpha \beta }$. This is the standard
infinitesimal Lorentz transformation of a vierbein, with transformation
parameters $d\theta _{ab}$. The expansion coefficients $e^{\mu }{}_{a}$ in
conjunction thus seem identifiable with the standard vierbein. In terms of
this vierbein, the previously introduced condition of covariant constancy of 
$s_{\mu }$, Eq. (\ref{Eq:sCovConst}), becomes%
\begin{equation}
\nabla _{\nu }e^{\mu }{}_{a}\equiv 0,  \label{Eq:eCovConst}
\end{equation}%
i.e., the covariant constancy of the vierbein, and the previously introduced
connection coefficients, Eq. (\ref{Eq:CartanConn}), become%
\begin{eqnarray}
\Gamma ^{\rho }{}_{\mu \nu } &=&\left\langle e^{\rho }{}_{a}s^{a},\partial
_{\nu }\left( e^{b}{}_{\mu }s_{b}\right) \right\rangle  \notag \\
&=&e^{\rho }{}_{a}\left( \partial _{\nu }e^{b}{}_{\mu }\right) \left\langle
s^{a},s_{b}\right\rangle  \notag \\
&=&e^{\rho }{}_{a}\partial _{\nu }e^{a}{}_{\mu },
\label{Eq:CartanConnVierbein}
\end{eqnarray}%
i.e., the standard Cartan connection, carrying torsion, but no curvature,
the setting thus being that of Weitzenb\"{o}ck spacetime and
teleparallelism. The spinor Lagrangian, Eq. (\ref{Eq:NonFlatSpacetimeL}),
and the formalism associated with it thus seems to be consistent with the
teleparallel formulation of general relativity. A few comments:

\begin{itemize}
\item In conjunction with Eq. (\ref{Eq:dthetaCovConst}), the covariant
constancy of the vierbein, $\nabla _{\nu }e^{\mu }{}_{a}\equiv 0$,
immediately implies that $\partial _{\nu }d\theta _{ab}\equiv \nabla _{\nu
}d\theta _{ab}\equiv 0$, i.e., that the infinitesimal parameters $d\theta
_{ab}$ are spacetime-independent, as befits a global Lorentz transformation.
Remember that for teleparallelism, only \textit{global} Lorentz
transformations are relevant, the local degrees of freedom being frozen out.

\item The spinor Lagrangian, Eq. (\ref{Eq:NonFlatSpacetimeL}), may of course
be expanded in terms of the vierbein, simply by inserting $s^{\mu }=e^{\mu
}{}_{a}s^{a}$ into the definitions of $M^{\mu \rho }{}_{\sigma }$ and $%
N^{\rho }{}_{\sigma }$, Eqs. (\ref{Eq:MDef}) and (\ref{Eq:NDef}), but in
view of the very aim of this article, that would obviously be
counterproductive as it would reintroduce Lorentz indices into the formalism.

\item The introduction and use in this section of the vierbein is \textit{not%
} tantamount to introducing Lorentz indices back into the formalism. Its
sole purpose is to be an auxiliary/calculational device for establishing the
consistency between the developed spinor Lagrangian formalism and the
teleparallel formulation of general relativity. Note that although the
teleparallel formulation of general relativity is build from the vierbein
and its first order derivatives, its Lagrangian effectively depends only on
the torsion tensor field, $T^{\rho }{}_{\mu \nu }$, compare for instance 
\cite{Maluf}, which does not carry any Lorentz indices.
\end{itemize}

\noindent So, in conclusion, collecting everything, it seems that by
combining the above spinor Lagrangian, Eq. (\ref{Eq:NonFlatSpacetimeL}),
with the Lagrangian for the teleparallel formulation of general relativity,
a formalism for the coupling of spinor fields to the gravitational field
using only world indices is provided, as asserted.

\end{document}